\documentclass[USenglish]{lipics-v2021}
\nolinenumbers
\hideLIPIcs

\usepackage{blindtext}

\usepackage{xspace}
\usepackage{graphicx}
\graphicspath{ {./figures/} }
\usepackage{pgfplots}
\pgfplotsset{width=10cm,compat=1.9}
\usepackage[export]{adjustbox}
\usepackage{listings}
\usepackage{mathpartir}
\usepackage[linesnumbered,ruled,vlined]{algorithm2e}
\usepackage{tikz}
\usepackage{tikz-cd}
\usetikzlibrary{decorations.pathmorphing,positioning,arrows.meta,positioning,calc, fit}
\usepackage{xcolor}
\usepackage{siunitx}
\usepackage{upgreek}
\usepackage{mathtools}

\definecolor{linkBlue}{HTML}{0044CC}
\hypersetup{colorlinks,linkcolor=.,citecolor=.,filecolor=.,urlcolor=linkBlue}

\usepackage[capitalise]{cleveref}

\definecolor{lightlightgray}{HTML}{dedede}

\title{Remote Concolic Multiverse Debugging}
\subtitle{Extended Version with Additional Appendices}

\author{Maarten Steevens}{Ghent University, Belgium}
{maarten.steevens@ugent.be}
{0009-0002-6339-0467}{Funded by the Research Foundation Flanders, grant number 1SA6C26N}

\author{Tom Lauwaerts}{Vrije Universiteit Brussel, Belgium}
{tom.lauwaerts@vub.be}
{0000-0003-1262-8893}{Funded by a project (grant number FWOOPR2020008201)
from the Research Foundation Flanders during the period in which this research was conducted.}

\author{Christophe Scholliers}{Ghent University, Belgium}
{christophe.scholliers@ugent.be}
{0000-0002-2837-4763}{}

\authorrunning{M. Steevens, T. Lauwaerts, and C. Scholliers}

\Copyright{Maarten Steevens, Tom Lauwaerts, and Christophe Scholliers}

\keywords{Multiverse Debugging, Embedded devices, WebAssembly}
    
\definecolor{mRed}{rgb}{1,0.4,0.4}
\definecolor{mBlue}{rgb}{0.4,0.4,1}
\definecolor{mGreenGraph}{RGB}{102,255,110}
\definecolor{mGreen}{rgb}{0,0.6,0}
\definecolor{mGray}{rgb}{0.5,0.5,0.5}
\definecolor{mDarkGray}{rgb}{0.2,0.2,0.4}
\definecolor{mPurple}{rgb}{0.58,0,0.82}
\definecolor{backgroundColour}{rgb}{1,1,1}

\lstdefinestyle{CStyle}%
{     basicstyle=\footnotesize\ttfamily\linespread{0.7}%
, captionpos=b%
, identifierstyle=%
, backgroundcolor=\color{backgroundColour}
, commentstyle=\color{mGray}
, keywordstyle=\bfseries\color{black}
, numberstyle=\tiny\ttfamily\color{mGray}
, stringstyle=\color{mRed}
, keywordstyle=\color{mBlue}\bfseries
, keywordstyle=[2]\color{mRed}
, keywordstyle=[3]\color{mBlue}
, keywordstyle=[4]\color{mRed}
, keywordstyle=[5]\color{mBlue}
, columns=spaceflexible%
, keepspaces=true%
, showspaces=false%
, showtabs=false%
, showstringspaces=false%
, numbers=left%
, numbersep=5pt%
}

\lstdefinelanguage{AssemblyScript}{
sensitive,
morecomment=[l]{//},
morekeywords={import, export, let, const, while, class, function, as, from, enum},
morekeywords=[2]{true, void, u32, i32, boolean, Pin, Color, string, Options},
morekeywords=[3]{@external},
morestring=[b]{"},
morestring=[b]` 
}

\EventEditors{}
\EventNoEds{0}
\EventLongTitle{}
\EventShortTitle{}
\EventAcronym{}
\EventYear{}
\EventDate{}
\EventLocation{}
\EventLogo{}
\SeriesVolume{}
\ArticleNo{}

\begin{document}

\begin{CCSXML}
<ccs2012>
   <concept>
       <concept_id>10010520.10010553.10010562.10010564</concept_id>
       <concept_desc>Computer systems organization~Embedded software</concept_desc>
       <concept_significance>300</concept_significance>
       </concept>
   <concept>
       <concept_id>10011007.10011006.10011039.10011311</concept_id>
       <concept_desc>Software and its engineering~Semantics</concept_desc>
       <concept_significance>500</concept_significance>
       </concept>
   <concept>
       <concept_id>10011007.10011074.10011099.10011102.10011103</concept_id>
       <concept_desc>Software and its engineering~Software testing and debugging</concept_desc>
       <concept_significance>500</concept_significance>
       </concept>
   <concept>
       <concept_id>10011007.10011006.10011066.10011069</concept_id>
       <concept_desc>Software and its engineering~Integrated and visual development environments</concept_desc>
       <concept_significance>300</concept_significance>
       </concept>
 </ccs2012>
\end{CCSXML}

\ccsdesc[300]{Computer systems organization~Embedded software}
\ccsdesc[500]{Software and its engineering~Semantics}
\ccsdesc[500]{Software and its engineering~Software testing and debugging}
\ccsdesc[300]{Software and its engineering~Integrated and visual development environments}

\maketitle

\begin{abstract}
Debugging nondeterministic programs is inherently difficult, particularly in microcontroller environments where execution paths can diverge unpredictably due to external sensor inputs. 
Traditional debugging techniques often fail to capture or reproduce this nondeterministic behavior effectively.
Multiverse debugging has emerged as a compelling technique to debug nondeterministic programs, allowing developers to systematically explore all possible execution paths.
Unfortunately, current multiverse debuggers are snapshot-based and most operate over a model of the program, which limits their use for debugging resource-constrained microcontrollers.
Additionally, current multiverse debuggers, even ones specifically designed for microcontrollers suffer from state explosion making the state space overwhelming during debugging.

To address these challenges, we introduce a trace-based multiverse debugger with a novel state-space reduction technique based on concolic execution.
Our approach interleaves concolic analysis with live debugging to identify input values that define unique program paths. 
This hybrid technique efficiently prunes redundant paths from the state space while ensuring full code coverage.
Unlike MIO, a recently published multiverse debugger for microcontrollers that focuses on IO consistency, our approach directly targets state explosion by leveraging concolic execution and uses a trace-based approach, significantly reducing the memory and communication overhead.

We implemented a prototype using the WARDuino WebAssembly virtual machine on an STM32 microcontroller, demonstrating the feasibility and efficiency of our approach in real-world scenarios. 
Our results highlight substantial reductions in the state space compared to traditional multiverse debugging. 
This makes multiverse debugging more accessible and efficient for developers working with complex, nondeterministic programs running on microcontrollers.
\end{abstract}

\section{Introduction}
Developers naturally debug programs by iteratively verifying or falsifying a hypothesis by examining the program's execution~\cite{zeller05, perscheid17}.
Nondeterministic programs complicate this method significantly~\cite{mcdowell89, gurdeep22}, particularly in microcontroller applications where bugs can manifest unpredictably due to external interactions.  
Traditional debuggers struggle to reliably and efficiently reproduce the circumstances in which bugs manifest.
Most existing debuggers do not provide developers with tools to manage and explore the different execution paths of the program being debugged.
One notable exception is a technique called multiverse debugging~\cite{torres19} which has emerged as a promising technique to enable programmers to browse through all possible execution paths. 
Most existing multiverse debuggers unfortunately operate over a model of the program instead of the concrete execution~\cite{torres19,pasquier22,pasquier23,pasquier23a}, and are fundamentally \emph{offline} techniques except for the recent MIO~\cite{mio} debugger. 
Additionally, browsing through all execution paths results in a state-explosion making current debuggers impractical.
To create an efficient debugger capable of dealing with state-explosion in a constraint microcontroller setting, we were confronted with the three main challenges.

First, for programs driven by nondeterministic sensor inputs, the number of potential execution paths increases exponentially. 
Each possible sensor value introduces a new branch in the execution tree that the programmer must then meticulously track and evaluate to determine what input combinations are interesting. 
This results in an overwhelming state space, making it computationally and mentally intractable to exhaustively explore all possibilities~\cite{liEmpiricalStudyConcurrency2023a}. 
Even systems with limited nondeterministic input values, for example an application with a simple temperature and light sensor\footnote{With a moderate precision 12 bit ADC each sensor has 4096 possible values, meaning that there are more than 16 million value pairs to consider.}, are already susceptible to this issue. 

Second, current multiverse debuggers are very resource-intensive due to their snapshot-based approach.
Each of these snapshots represents a complete state of the program, including all memory contents. 
The sheer volume of data required to represent these snapshots quickly becomes prohibitive, particularly when considering the limited resources available on microcontrollers.

Third, current offline techniques do not align with developers preferences for debugging on the concrete hardware \cite{makhshari21} to avoid inaccuracies from simulators or approximations~\cite{roska90,khan11}.

In this work, we present a solution to these challenges: a \emph{remote concolic} multiverse debugger that focusses on pruning the state-space in multiverse debuggers instead of dealing with IO consistency like previous approaches such as MIO~\cite{mio}.

First, the central innovation of this work is the integration of a static analysis technique, concolic execution, into the multiverse debugging process to mitigate state explosion. 
Traditional multiverse debugging~\cite{torres19,mio} relies on exhaustive exploration of all sensor inputs. 
In contrast, our approach performs online concolic execution on a more computationally powerful remote host to analyze the control flow.
The symbolic analysis identifies critical input values necessary to cover all program paths.
These inputs can then be used by the debugger to explore the concrete execution.
The key advantage stems from this hybrid methodology: the static analysis intelligently prunes the possible inputs, and provides example values for each possible path to the debugger to explore.
This combination significantly reduces the developer's cognitive load as the debugger can \emph{visually present} only the essential execution paths, thereby simplifying the analysis of complex nondeterministic behavior.

Second, we address the high resource consumption of conventional multiverse debuggers. 
Existing multiverse debuggers capture the entire program state for each explored state, which requires excessive memory and communication. 
Our approach replaces this with a highly efficient, trace-based approach. 
Instead of storing complete memory snapshots, our debugger records only the sequence of nondeterministic events (e.g., sensor inputs) along the execution paths. 
This minimal trace provides all the information necessary to deterministically replay any path from an initial state, significantly reducing communication and memory overhead. 

Finally, just like MIO~\cite{mio} we have chosen to build an online debugger that operates directly on the embedded device, catering for the embedded software developer preferences.

\subsection{Contributions}
We introduce a trace-based remote concolic multiverse debugger that employs concolic execution to intelligently guide the exploration of a program's nondeterministic behaviors, directly addressing the critical problem of state explosion.
The realization of this approach led to the following main contributions:

\begin{itemize}
	\item The first practical implementation\footnote{Our prototype is provided as an open-source project built upon the user interface of MIO, and can be found here: \url{https://github.com/TOPLLab/MIO/tree/rcmd}} of a novel hybrid debugging architecture combining concolic execution with multiverse debugging targeting microcontrollers.
	
	\item A novel state-space reduction technique that reduces the state-space of nondeterministic programs in multiverse debuggers.

	\item An alternative implementation strategy for multiverse debuggers that uses a trace-based approach instead of a snapshot-based approach, significantly reducing debugger overhead.

	\item A formal model that defines the interaction between online concolic analysis and trace-based multiverse debugging, grounding our hybrid architecture in a theoretical foundation.
	
	\item A proof showing the correctness of our approach.
	
	\item A quantitive and qualitative evaluation over a set of Arduino programs, demonstrating a substantial reduction of the state space compared to existing multiverse techniques.
\end{itemize}

\section{Hands on Remote Concolic Multiverse Debugging}\label{sec:perspective}

\begin{figure*}
	\centering
	\includegraphics[width=\textwidth]{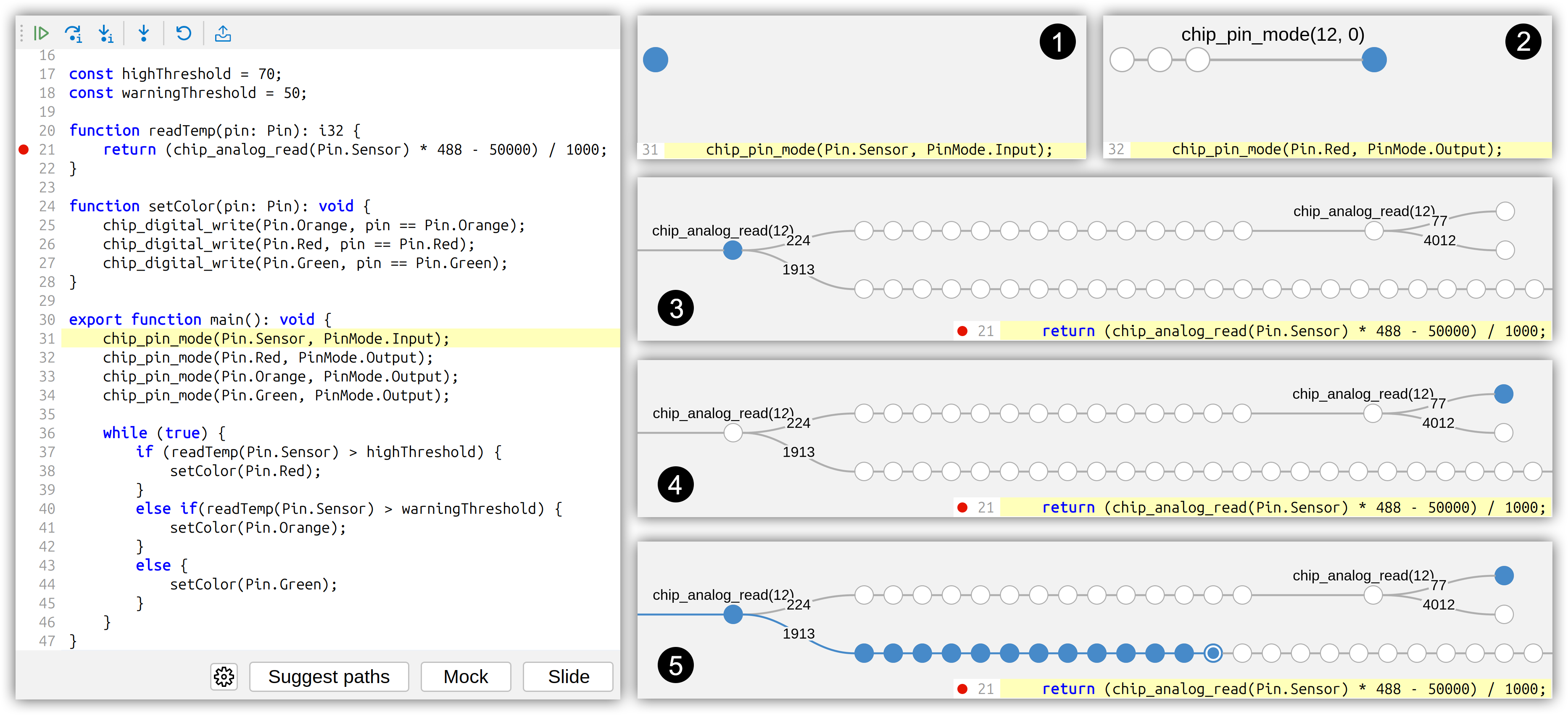}
	\caption{
	    Partial screenshots of the proposed debugger.
      \emph{Left:} the source view. \emph{Right:} the multiverse tree view as it grows and updates during the debugging session.}
        \label{fig:example02}
\end{figure*}

In this section we provide an overview of how our hybrid approach to debugging works in practice, by showing how a small program can be debugged using our prototype.

\subsection{A Temperature Dashboard Application}
In \Cref{fig:example02} we show our prototype being used to debug a temperature dashboard application, shown in the left pane of the debugger.
See \cref{app:prototype} for a full screenshot of our debugger.

This program reads a temperature sensor to control 3 LEDs, leading to three logical paths in the execution.
If the temperature is higher than 70 degrees, a red LED will turn on.
If the temperature is between 50 and 70 degrees an orange warning LED turns on.
Below 50 the LED is set to green.
In this relatively simple example, the temperature is measured using an analog sensor that varies its resistance based on temperature. 
The resulting analog input signal is then converted into degrees Celsius.

This simple program contains a subtle bug that is hard to discover with traditional debuggers. 
Instead of reading the temperature sensor once and then using this value, the program reads the temperature multiple times (line 37 and 40). 
This can lead to situations where the lights are incorrectly switched on or off given the last sensor value. 
In addition to these glitches, reading out the sensor multiple times per iteration slows down the program and consumes more battery power. 
Using our debugger, these issues can be identified.

\subsection{Remote Concolic Multiverse Debugging Session}
\Cref{fig:example02} illustrates the different stages of a remote concolic multiverse debugging session over our example program. 
After loading the debugger and the program, the first line of the source code is highlighted as shown on the left.
The multiverse tree at this point is only a single root node, the entry point of the program (\emph{pane 1}, \cref{fig:example02}).
Stepping forward extends the current path as shown in \emph{pane 2} (\cref{fig:example02}).
Every edge in this path is a single WebAssembly instruction, and edges for primitive calls are labeled with the name and provided arguments.

Imagine the programmer places a breakpoint just before reading the temperature sensor (line 21) and lets the program run until this point.
After the breakpoint is hit, when using a traditional remote debugger, the programmer would proceed through the program code with an arbitrary temperature value (i.e. the current temperature).
Alternatively the developer might manually heat or cool down\footnote{Such as in this GitHub issue where a bug is reproduced using ice packs \url{https://github.com/tobyweston/temperature-machine/issues/13}.} the sensor to guide the program execution.

In contrast, our concolic multiverse debugger helps guide a programmer's exploration using its concolic execution.
The developer can simply ask the debugger for interesting future execution paths by pressing the `Suggest paths' button (\emph{source view}, \cref{fig:example02}).
The debugger then uses concolic execution to identify all future execution paths within a user configurable bound (number of instructions and/or number of nondeterministic operations).
The possible paths found by the analysis are added to the multiverse tree view (\emph{pane 3}, \cref{fig:example02}), showing each of the possible executions as a new branch.
Compared with remote debugging, this feature frees developers from having to manually find the necessary sensor values for specific control flow paths, either through trial and error or by manually analysing the program which quickly becomes too complicated as the program's complexity increases.

Each choice point in our visualisation corresponds with a call to a nondeterministic input primitive, we label the edges starting from a choice point with the corresponding input values.
This gives developers a clear overview of how the input values influence the program's behavior. 
For instance, in \emph{pane 3} (\cref{fig:example02}), the concolic analysis found two possible branches after executing the \textit{chip\_analog\_read(12)} primitive, with input values 224 and 1913.

Subsequently, the developer can manually explore the effects of the nondeterministic input, by choosing one of the branches added by the concolic analysis.
Developers do not have to change the temperature manually to change the input value to correspond to their desired branch.
Instead, the debugger provides a mocking functionality through the `mock' button, which launches a new pop-up window.
In the pop-up, developers can specify the return value for a specific input primitive and its specific arguments.
In our example, the developer changes the return value for reading pin 12, to 224, steps forward to the next analog read instruction, mocks it to read value 77,  reaching the state shown in \emph{pane 4} (\cref{fig:example02}).
During execution the debugger uses the mocked values instead of the actual sensor value.

While mocking is an essential feature for exploration, it takes quite some manual effort to provide all the right sensor values when exploring the multiverse graph.
To make it easer to navigate the graph, the developer can simply click on a node and press the `Slide' button, to slide to that particular execution state, thereby entering the desired branch.
When sliding to an entirely new branch with different parents, the debugger restarts the program and mocks the right values at each choice point automatically.
\emph{Pane 5} (\cref{fig:example02}) shows how the debugger highlights the path that will be taken from the root node to the selected node (highlighted with a white circle), before the \emph{Slide} button is pressed.
Note, this operation does not modify the tree, even when restarting the program. All previous branches remain visible.

As the outlined debug session shows, the concolic multiverse debugger improves on the common remote debugging experience by making it easier to explore multiple execution paths and gives the developer a clear visual overview of choice points in the program.
Exploring different execution paths is made easier by the \textit{sliding} feature, allowing developers to go from one branch to another, and by the \textit{mocking} feature, which allows the debugger to easily mock any input action.
However, the advantages go beyond this.

The concolic analysis coupled with the tree visualisation can already point out the cause of a bug at a glance.
For instance, when the branches provided by the analysis differ from what is expected, this usually points to a mistake in the program's logic.
This is certainly the case in our example, where one expects a single choice point with three options for the current temperature.
However, the trace shows two choice points with two branches each.
This is a consequence of reading the analog sensor value \emph{twice} in the if statement (lines 37 and 40) instead of reading it once, and storing it in a variable.
As a result the temperature value can differ between the two if statements, which can potentially cause strange behavior.

\section{Contrast with Traditional Debugging}
The outlined debugging scenario with the remote concolic multiverse debugger, contrasts sharply with both the way developers traditionally debug nondeterministic programs on microcontrollers, and with debugging sessions in current multiverse debuggers.

\subparagraph*{Traditional Debugging.}
Our example application illustrates several difficulties in debugging nondeterministic programs with traditional \emph{linear} debuggers, where only one path of the program can be explored at a time.
These debuggers have three major drawbacks.

First, when searching for a bug developers often restart and execute the program multiple times. 
Unfortunately, with a traditional remote debugger the programmer will possibly explore vastly different execution paths each time.
With remote debuggers it is the developers responsibility to keep track of all the differences and similarities between these executions.
The multiverse tree visualisation in our debugger reduces the cognitive load for developers.

Second, it is often not clear whether all interesting execution paths have been considered during debugging. 
In the example, it is hard to predict which sensor values will turn on specific LEDs (due to the non-trivial conversion function, line 20--22).
In more complex programs this can only get more challenging, as more complex operations are performed.
In contrast, our on-demand concolic analysis provides the minimal set of branches the developer needs to consider, along with concrete example values.

Third, even when we know what inputs are needed to explore our desired execution path, it is difficult to manipulate the environment during debugging, to achieve those exact inputs.
Manipulating other sensors, or manipulating programs with more than one sensor, poses additional challenges.
Luckily in our debugger, the programmer can \emph{slide} to the desired location automatically, while the debugger mocks the needed values along the way.

\subparagraph*{Multiverse Debuggers.}
Most multiverse debuggers are not designed for debugging live programs. 
Instead, they operate on the semantics of the underlying programming language~\cite{torres19,pasquier22,pasquier23,pasquier23a}.
Recently, MIO~\cite{mio}, a debugger focussing on IO consistency, became the first multiverse debugger for concrete executions.
MIO finally makes it possible to debug live programs, but like previous approaches it suffers from state explosion. Even for simple programs, the number of possible execution paths can become unmanageable.

For instance, in a program with two sensor readings, a multiverse debugger might present the user 4096 options at each choice point. 
This results in a total of $4096^2$, possible execution paths. 
The programmer is then tasked with manually identifying specific paths from this vast search space to debug the program. 
In contrast to MIO this work focusses on the state-explosion problem and leverages the power of concolic execution to identify sensor values needed to explore each of the execution paths.
In our example program, this significantly reduces the cognitive burden for the programmer since they no longer need to reason about which raw sensor values, that are later converted to Celsius result in a specific branch.

Additionally, existing multiverse debuggers such as MIO use a snapshot based approach where a snapshot is taken after each IO operation. This enables reversible IO in multiverse debugging. Since this work focusses on state-space reduction and not IO, we took a different approach that trades this reversibility for performance by using a trace-based approach.

\section{Remote Concolic Multiverse Debugging}\label{sec:trace}
\label{sec:remote-concolic-multiverse-debugging}
In this section, we describe the operation of our trace-based multiverse debugger through a small-step semantics defined over stack-based language, specifically WebAssembly \cite{haas17}, as this formalisation is representative for a wide variety of virtual machines.
The formalization abstracts away from the details of the underlying WebAssembly semantics as much as possible.
Our novel system is at heart a remote debugger, where a client and server component exchange messages as described by the small-step semantics.
The client component keeps track of the multiverse tree and performs the concolic analysis, while the server instruments and maintains the concrete runtime to perform the live debugging operations.
The semantics are defined over several configurations, which we will discuss first.

\subsection{Debugger Configuration}

\begin{figure}
  \begin{minipage}[b]{0.50\textwidth}
        \centering
        	$$
	      \begin{array}{ l l c l }
          \multicolumn{3}{l}{\textsf{\textbf{WebAssembly Configuration}}} & \\
	        \emph{(Locals)}           & \rho                       & \Coloneqq & i32 \rightarrow v \\
	      	\emph{(Globals)}          & \updelta                   & \Coloneqq & i32 \rightarrow v \\
	      	\emph{(Stack)}            & st                         & \Coloneqq & i32 \rightarrow v \\
	      	\emph{(Memory)}           & \mu                        & \Coloneqq & i32 \rightarrow v \\
	      	\emph{(Instructions)}     & e                          & \Coloneqq & i32.const \; | \; ...\\
	        \emph{(Program state)}    & K & \Coloneqq & \{ \rho,\updelta,st,\mu,e^* \} \\
	      	\emph{(Primitive tables)} & P^{Out | In}  & \Coloneqq & p^* \\
          \emph{(Input Primitive)}  & P^{In}(j) & \Coloneqq & f : v^* \rightarrow v \\ 
          \emph{(Output Primitive)} & P^{Out}(j) & \Coloneqq & f : v^* \rightarrow \epsilon
	      \end{array}
	      $$
	      \caption{WebAssembly configuration of the remote debugger.}
	      \label{fig:configuration-remote-debugger}
    \end{minipage}
\hfill
\begin{minipage}[b]{0.4\textwidth}
      \centering
	    \begin{mathpar}
        \inferrule[(\textsc{Input-Prim})]
       	    { 
                P^{In}(j) = p \\
                v \in \lfloor p(v_a^*) \rfloor \\
            }
            {
              \{ \rho,\updelta,v^*_a : v^*,\mu, (\textbf{call} \; j) : e^* \} \\
              \hookrightarrow_{i}
              \{ \rho,\updelta, v : v^*,\mu, e^* \} 
            }

            \inferrule[(\textsc{Output-Prim})]
       	    { 
                P^{Out}(j) = p \\
                \lfloor p(v_a^*) \rfloor \\
            }
            {
              \{ \rho,\updelta,v^*_a : v^*,\mu, (\textbf{call} \; j) : e^* \} \\
              \hookrightarrow_{i}
              \{ \rho,\updelta, v^*,\mu, e^* \} 
            }
	    \end{mathpar}
    \caption{Extension of the WebAssembly language with primitives.}
	    \label{fig:primitive-execution}
    \end{minipage}
\end{figure}

Although the debugger can be viewed as a single monolithic unit, it is more practical to conceptualize it as comprising three distinct components: the WebAssembly virtual machine (VM), which defines the foundational language semantics; the server, which operates as the remote debugger on the microcontroller; and the client running on a desktop computer, which interacts with the server to facilitate debugging. We begin by outlining the WebAssembly configuration, followed by a detailed discussion of the server configuration. Next, we examine the client configuration, and give an overview of the complete remote concolic debugger setup. In later sections, we delve into the semantics of both the server and the client components. 

\subsubsection{WebAssembly Configuration}
A core component of the WebAssembly configuration, shown in \cref{fig:configuration-remote-debugger} is the program state $K = \{ \rho,\updelta,st,\mu,e^* \}$ where $\rho$ are the local variables, $\updelta$ are the global variables, $st$ is the data stack, $\mu$ is the store and $e^*$ is the instruction stack.
The semantics of WebAssembly are defined as a small-step operational semantics, where the program state $K$ is updated through a transition relation $\hookrightarrow_{i}$. While the semantics of WebAssembly are quite intricate, it is sufficient to know that the transition relation $\hookrightarrow_{i}$ as defined in~\cite{haas17} is fully deterministic. 
On top of these deterministic base semantics we add nondeterministic primitive operations.

The WebAssembly VM used in this work supports a foreign function interface to define primitive operations. These primitives are exposed as a special WebAssembly module in the VM. By importing this module, programs can interact with external I/O elements, enabling functionality that cannot be achieved using standard \emph{deterministic} WebAssembly instructions. Since primitives interact with the external environment and perform operations beyond the scope of standard WebAssembly instructions, they are the \emph{only} source of non-determinism.

In the semantics, there are two primitive tables $P^{Out}$ and $P^{In}$ containing respectively the output and input primitives.
These tables are consulted to determine whether an instruction is invoking a primitive. 
We use $P^{*}$ to reference both tables as one. The notation `non-prim K' indicates that $K = \{ \rho,\updelta,v_a^* : st,\mu,call \; j \}$ and $P^{*}(j) = p$ does not hold, meaning the instruction is not a primitive. Conversely, the notation `prim K' indicates that $P^{*}(j) = p$ holds, meaning that the next instruction will be a primitive call. 
The function types show how input primitives return a single value $v$, and output primitives return no values.
\Cref{fig:primitive-execution} shows how input and output primitves are evaluated.
We use the notation $\lfloor p(v_0^*) \rfloor$ to indcate that the primitive function $p$ is executed outside the WebAssembly semantics.
For the input primitives the return value $v$ is nondeterministic as indicated by the notation $v \in \lfloor p(v_0^*) \rfloor$.

\subsubsection{Server Configuration}
\begin{figure}
	$$
	\begin{array}{ l l c l }
    \multicolumn{3}{l}{\textsf{\textbf{Server Configuration}}} & \\
		\emph{(Messages)}   & msg_s & \Coloneqq  & \mathit{step} \; | \; \mathit{pause} \; | \; \mathit{play} \;| \; \mathit{break}^{+} id \;| \; \mathit{break}^{-} id \; | \; \mathit{mock} \; v \;|  \\
                               &      &            & \mathit{inspect}  \; | \; \mathit{reset}\\
		\emph{(Breakpoints)}     & bps & \Coloneqq & \varnothing \; | \; bps \cup \{ id \}\\
		\emph{(Execution state)} & es & \Coloneqq  & \textsc{Running} \; | \; \textsc{Paused} \\
	   	\emph{(Server state)}  & S & \Coloneqq  & \langle es, msg_s^i, msg_c^o, bps, c_{instr}, K\rangle \\
	\end{array}
	$$
	\caption{Server configuration of the remote debugger.}
	\label{fig:server-configuration}
\end{figure}
\label{sec:server-configuration}
The server state $S$ shown in \cref{fig:server-configuration} is a tuple ${\langle es, msg_s^i, msg_c^o, bps, c_{instr}, K\rangle}$ where $es$ is the execution state, 
 $msg_s^i$ is the incoming message queue for receiving from the client, $msg_c^o$ is the outgoing message queue for sending to the client, $bps$ are the currently set breakpoints, $c_{instr}$ is a counter tracking how many instructions were executed since the last synchronisation with the client, and $K$ is the current program state.
The execution state indicates whether the program is currently \textsc{Running} or \textsc{Paused}.
Incoming messages from the client are server messages $msg_s$.
Breakpoints can control the execution of the program and are added or removed using the $break^{+} id$ and $break^{-} id$ messages respectively. 
In the following sections we explain how the $c_{instr}$ counter allows for efficiently synchronizing the client and server.

\subsubsection{Client Configuration}
\label{sec:client-configuration}
\begin{figure}
    \centering
    \begin{minipage}[b]{0.58\textwidth}
        \centering
        $$
	\begin{array}{ l l c l }
    \multicolumn{3}{l}{\textsf{\textbf{Client Configuration}}} & \\
		\emph{(Messages)}   & msg_c & \Coloneqq  & \mathit{prim}(c_{instr}, v) \; | \\
		&&&\mathit{executed}(c_{instr}) \; |  \\
		&&&\mathit{slide}(T_{target}) \; | \\
		&&&\mathit{snapshot}(K) \\
		\emph{(Tree)}   & T & \Coloneqq  & \mathit{Node} \; (tl, T)^* \\
     	        \emph{(Tree-Labels)}   & tl & \Coloneqq  & \mathit{step}  \; |  \; \mathit{mock} \; v      \\
	    \emph{(Client State)}  & C & \Coloneqq  & \langle msg^i_c, msg^o_s, T,T_{curr} \rangle \\
	\end{array}
	$$
	\caption{Client configuration of the remote debugger.}
	\label{fig:client-onfiguration}    \end{minipage}
    \hfill
    \begin{minipage}[b]{0.4\textwidth}
        \centering
        \begin{tikzpicture}[main/.style = {draw, circle, minimum size=0.5cm},scale=0.8]
    \begin{scope}[shift={(0, 0)}]
			\node[main,fill=blue!30] (1) at (0,0) {};
        \node[main,fill=blue!30] (2) at (2,0) {};
			\node[main] (4a) at (4,1) {};
			\node[main,fill=blue!30] (4b) at (4,-1) {};
			\node[main] (5a) at (6,1) {};
			\node[main,fill=blue!7] (5b) at (6, -0.5) {};
			\node[main] (5c) at (6, -1.5) {};
        \draw[->] (1) -- node[midway, above] {$\mathit{step}$} (2);
			\draw[->] (2) to [out=5,in=200] node[midway, above, yshift=4pt, xshift=-12pt] {$\mathit{mock} \; v_1$} (4a);
			\draw[->] (2) to [out=-5,in=160] node[midway, below, yshift=0pt, xshift=-12pt] {$\mathit{mock} \; v_2$} (4b);
        \draw[->] (4a) -- node[midway, above] {$\mathit{step}$} (5a);
        \draw[->] (4b) to [out=5,in=190] node[midway, above, yshift=3pt, xshift=-4pt] {$\mathit{mock} \; v_1$} (5b);
			\draw[->] (4b) to [out=-5,in=170] node[midway, below, yshift=-3pt, xshift=-4pt] {$\mathit{mock} \; v_2$} (5c);
			\draw[->] ++(0,0.6) -- (1);
			\node[] (A0) at (0, 0.9) {$T$};
			\draw[->] ++(4,-0.1) -- (4b);
			\node[] (A0) at (4, 0.2) {$T_{curr}$};
			\draw[->] ++(6,0.2) -- (5b);
			\node[] (A0) at (6, 0.5) {$T_{next}$};
    \end{scope}        \end{tikzpicture}
        \caption{Example of a multiverse tree. The edges in this tree are labeled using the debug operations needed to traverse from one node to another.}
	    \label{fig:timelines}
    \end{minipage}
\end{figure}

The client state shown in \cref{fig:client-onfiguration} is represented as a tuple $C = \langle msg^i_c, msg^o_s,$ $T, T_{curr} \rangle$ where $msg^i_c$ is the incoming message queue for receiving messages from the server, $msg^o_s$ is the outgoing message queue for sending messages to the server, $T$ is the root of the multiverse tree and $T_{curr}$ is the current node in the multiverse tree. 

The multiverse tree is used to represent the possible execution paths of the program and consists of deterministic and nondeterministic nodes. In \cref{fig:timelines} we show an example of a multiverse tree with a root node $T$ and a current node $T_{curr}$. In the semantics this tree would be described as $\mathit{Node} \; [(\mathit{step}, \mathit{Node} \; [(\mathit{mock} \; v_1, ...), \;$ $ (\mathit{mock} \; v_2, ...)])]$. This indicates there is a root $Node$ with one child connected using a $step$ edge. This child has two children each connected using a $mock$ edge. The edge labels in the multiverse tree indicate the debugging operations needed to traverse the tree. For example, to move from node $T_{curr}$ to $T_{next}$, the $mock \; v_1$ operation has to be performed. How each of these operations work exactly and how they affect the tree is discussed in \cref{sec:client-semantics}.

\subsection{Global Debugging Semantics}\label{sec:global-semantics}
\Cref{fig:serverClientCommunication} shows the semantics describing the communication between the server and the client in the debugger.
These rules also specify when the server can take a single step. 
Each rule is defined as a transition (denoted by $\rightarrow$) over the pair of client and server configurations $C|S$.

Whenever the server puts a message for the client in its outbox the \textsc{Server-To-Client} rule applies which causes the client to process this message using the client transition relation $\leadsto$.
The server-to-client communication has priority over all rules in the system, therefore on the client outgoing messages are always sent after processing all incoming messages.

When the server has no outgoing messages and the client has one or more messages in its outbox compatible ($\approx$) with the current state of the server, the message is processed by the server (denoted by $\hookrightarrow_{d,i}$) as shown in the \textsc{Client-To-Server} rule.
This compatibility relation essentially means the client can only send messages when the server is in the \textsc{Paused} state.
The only exception to this rule is the $\mathit{pause}$ message, which can only be sent from the to server when it is in the \textsc{Running} state\footnote{Note that in our semantics the user can easily bring the debugger in a stuck state by sending an incompatible message, we consider such scenario's to be part of a bug in the frontend of the debugger.}. This lets the client pause the server at any time.

When all message queues of the server and client are empty, the \textsc{Server-Step} rule applies and the server can take a single step under the $\hookrightarrow_{d,i}$ relation. In the following section we will focus on how this reduction relation is defined. 

Note that in our semantics, we have chosen to make all operations \emph{synchronous}, the server is blocked while the client processes messages. 
This decision is primarily motivated by clarity, as \emph{asynchronous} semantics would be more difficult to understand and reason about.
\begin{figure*}
	\begin{mathpar}
		\inferrule[(\textsc{Server-To-Client})]
		{
			\langle msg_c, msg_s^*, T,T_{curr} \rangle \leadsto C' \\
		}
		{
			\langle \varnothing, msg_s^*, T,T_{curr} \rangle \; | \; \langle es, \varnothing, msg_c, bps, c_{instr}, K \rangle
			\rightarrow
			C' \; | \; \langle es, \varnothing, \varnothing, bps, c_{instr}, K \rangle
		}\\
		
		\inferrule[(\textsc{Client-To-Server})]
		{
			S = \langle es, \varnothing, \varnothing, bps, c_{instr}, K \rangle \\
			(es \approx msg_s) \\
			\langle es, msg_s, \varnothing, bps, c_{instr}, K \rangle \hookrightarrow_{d,i} S' \\
		}
		{
			\langle \varnothing, msg_s . msg_s^*, T,T_{curr} \rangle \; | \; S
			\rightarrow
			\langle \varnothing, msg_s^*, T,T_{curr} \rangle \; | \; S'
		} 

		\inferrule[(\textsc{Server-Step})]
		{
			noMessages(C,S) \\ S \hookrightarrow_{d,i} S' \\
		}
		{
			C  \; | \; S
			\rightarrow
			C \; | \; S'
		}
	\end{mathpar}
	\caption{Global communication and debugging rules: server-to-client messages have priority over client-to-server messages. When all outboxes and inboxes are empty the client can take a single step. }
	\label{fig:serverClientCommunication}
\end{figure*}

\subsection{Server Semantics}
\label{sec:server-semantics}

Having discussed the global communication rules, we can now focus on local client and server reduction rules. We start our overview with the reduction relation of the sever $\hookrightarrow_{d,i}$.

The server has two states: \textsc{Running} or \textsc{Paused}, the rules for both cases are similar.
In the \textsc{Running} state, the server will execute the program while keeping track of a counter tracking the number of executed instructions since the last synchronisation with the client. Whenever an input primitive is executed, a so-called $\mathit{prim}(c_{instr}, v)$ message is sent to the client. This message contains the counter and the return value of the nondeterministic input primitive. After sending this message the server will reset the counter to $0$. The full rules for this behaviour can be found in \cref{fig:runServerRules} which is part of \cref{app:running-rules}.

The \textsc{Paused} rules are similar but instead always notify the client of every executed instruction so that the client knows exactly where in the program the server is.
The rules shown in \cref{fig:server:paused} define the semantics in the \textsc{Paused} state.
The server can go into the \textsc{Paused} state by hitting a breakpoint or because the developer sent a $pause$ message to the microcontroller. 
When switching to the \textsc{Paused} mode the client will be notified how many instructions were executed since the last synchronisation with the client.

Once the debugger is paused, the client can instruct the server to step through the execution by sending $step$  or $mock$ messages. 
When stepping, the server differentiates between normal instructions, calls to output primitives, and calls to input primitives as reflected in the \textsc{Dbg-Step}, \textsc{Dbg-Step-Prim-Out} and \textsc{Dbg-Step-Prim-In} rules.
When taking a step over a regular instruction or a call to an output primitive, the server notifies the client that one deterministic step was taken using $\mathit{executed}(1)$. When stepping over a call to a nondeterministic input primitive, the client is notified of the value returned using $\mathit{prim}(1,v)$. For the primitives $\lfloor p(v_a^*) \rfloor$ is used to indicate the operation is performed outside of the system. In this case the value $v$ produced outside of the system and can be nondeterministic.

The \textsc{Dbg-Mock} rule allows the developer to pick which execution should be explored by controlling the execution of nondeterministic primitives.
Instead of executing the primitive, the debugger replaces the arguments $v_a$ to the call with the supplied value $v$ and removes the $\text{call} \; e$ from the execution stack. An important requirement here is that $v$ must be a value that the primitive can produce in the underlying language semantics.
With this rule alone, a developer has to manually choose the value to be explored. However, it is often complicated to figure out which values are needed to explore specific execution paths such as in the basic example in the previous section. In \cref{sec:concolic}, we explain how we automatically generate the necessary values removing the need to enumerate countless options.

The \textsc{Dbg-Play}, \textsc{Dbg-Inspect} and \textsc{Dbg-Reset} rules of  the debugger, detailed in \cref{fig:additional-paused}, respectively transition the debugger to the running state, request a snapshot and restart execution from the start of the program.

\begin{figure*}
	\begin{mathpar}

		\inferrule[(\textsc{Dbg-Step})]
       		{
				\textsf{non-prim} \; K \\
				K \hookrightarrow_{i} K' \\
            }
            {
				\langle \textsc{Paused}, \mathit{step}, \varnothing, bps, 0, K \rangle
				\hookrightarrow_{d,i} 
				\langle \textsc{Paused}, \varnothing, \mathit{executed}(1), bps, 0, K' \rangle
			}\\
		\inferrule[(\textsc{Dbg-Step-Prim-Out})]
       		{
				K = \{ \rho,\updelta,v_a^*:v^*,\mu,e:e^* \} \\
				e = \textbf{call} \; j \\
				P^{Out}(j) = p \\
				\lfloor p(v_a^*) \rfloor
            }
            {
				\langle \textsc{Paused}, \mathit{step}, \varnothing, bps, 0, K \rangle
				\hookrightarrow_{d,i} 
				\langle \textsc{Paused}, \varnothing, \mathit{executed}(1), bps, 0, \{ \rho,\updelta,v^*,\mu,e^* \} \rangle
			}\\

		\inferrule[(\textsc{Dbg-Step-Prim-In})]
       		{
				K = \{ \rho,\updelta,v_a^*:v^*,\mu,e:e^* \} \\
				e = \textbf{call} \; j \\
				P^{In}(j) = p \\
				v \in \lfloor p(v_a^*) \rfloor
            }
            {
				\langle \textsc{Paused}, \mathit{step}, \varnothing, bps, 0, K \rangle
				\hookrightarrow_{d,i} 
				\langle \textsc{Paused}, \varnothing, \mathit{prim}(1, v), bps, 0, \{ \rho,\updelta,v:v^*,\mu,e^* \} \rangle
			}\\
		\inferrule[(\textsc{Dbg-Mock})]
       		{
				K = \{ \rho,\updelta,v_a^*:v^*,\mu,e:e^* \} \\
				e = \textbf{call} \; j \\
				P^{In}(j) = p \\
				v \in codom(p)
            }
            {
				\langle \textsc{Paused}, \mathit{mock} \; v, \varnothing, bps, 0, K \rangle
				\hookrightarrow_{d,i}
				\langle \textsc{Paused}, \varnothing, \mathit{prim}(1, v), bps, 0, \{ \rho,\updelta,v:v^*,\mu,e^* \} \rangle
			}
		\end{mathpar}		
	\caption{Server rules used when the debugger is paused.}
	\label{fig:server:paused}
\end{figure*}

\subsection{Client Semantics}
\label{sec:client-semantics}

The client in this system does most of the heavy lifting since it does not run on the microcontroller.
Instead the client has a message queue that processes messages received from the microcontroller (server). 
Using these messages the debugger constructs the navigable multiverse tree. 
The tree has the structure described in \cref{fig:configuration-remote-debugger} as previously explained in \cref{sec:client-configuration}. Every node in the tree can have zero or more children. The edges of the tree are marked with debugging operations used to navigate the tree. For deterministic nodes the $step$ operation is used, for nondeterministic nodes $mock$ will be used. This tree and the current node is stored alongside the incoming and outgoing messages in the debugger state.

\subsubsection{Building and Navigating the Tree}
During the execution of a program, the multiverse tree stored in the client is both built up and navigated by the debugger. 
Two types of messages notify the client about executed instructions on the server: $\mathit{executed}(c_{instr})$ and $\mathit{prim}(c_{instr}, v)$. 
The former indicates the server executed a specified number of deterministic instructions, denoted by $c_{instr}$. 
The latter notifies the client that the server has executed $c_{instr}$ number of deterministic instructions followed by a nondeterministic primitive that returned $v$.
When these messages are sent by the server they are transferred into the inbox of the client. 
Upon arrival at the client, the debugger processes these message with the \textsc{Executed} and \textsc{Prim} rules shown in \cref{fig:client-rules}. 

Both the \textsc{Executed} and \textsc{Prim} rules make use of the $\Rightarrow_{Traverse}^*$ operation. This operation navigates the tree and creates new paths if necessary based on the provided sequence of input messages. In essence it just follows the messages which are also the labels of the edges in the tree one by one and creates new edges if the current message it is processing does not exist yet. The exact details for these rules can be found in \cref{appendix:traversal-rules}.

\begin{figure}
	\centering
    \begin{tikzpicture}[main/.style = {draw, circle, minimum size=0.5cm},scale=0.875]
    \begin{scope}[shift={(0, 0)}]
		\node[main,fill=blue!30] (1) at (0,0) {};
        \node[main,fill=blue!30] (2) at (2,0) {};
        \node[main,fill=blue!7] (3) at (4,0) {};
        \node[main,fill=blue!7] (4) at (6,0) {};
        \draw[->] (1) -- node[midway, above] {$step$} (2);
        \draw[->,line width=0.25mm,blue!50] (2) -- node[midway, above] {$step$} (3);
        \draw[->,line width=0.25mm,blue!50] (3) -- node[midway, above] {$step$} (4);
        \draw[->] ++(2, 0.6) -- (2);
		\node[] (A0) at (2, 0.9) {$T_{curr}$};
		\draw[->] ++(6, 0.6) -- (4);
		\node[] (A0) at (6, 0.9) {$T_{curr}'$};
    \end{scope}        \end{tikzpicture}
    \caption{Example showing the effect of the $executed(2)$ message on the multiverse tree.}
    \label{fig:executed-rule}
\end{figure}

\begin{figure}
    \centering
    \begin{minipage}[t][5cm]{0.48\textwidth}
        \centering
        \begin{tikzpicture}[main/.style = {draw, circle, minimum size=0.5cm},scale=0.875]
		\begin{scope}[shift={(0, 0)}]
    		\node[main,fill=blue!30] (1) at (0,0) {};
            \node[main,fill=blue!30] (2) at (2,0) {};
            \node[main] (3) at (4,0) {};
            \node[main] (4) at (6,0) {};
            \draw[->] (1) -- node[midway, above] {$\mathit{step}$} (2);
            \draw[->] (2) -- node[midway, above] {$\mathit{step}$} (3);
            \draw[->] (3) -- node[midway, above] {$\mathit{mock} \; v_1$} (4);
            \draw[->] ++(2, 0.6) -- (2);
    		\node[] (A0) at (2, 0.9) {$T_{curr}$};
			\node[anchor=west] (A2) at (6.5, 0) {Before};
        \end{scope}
        \begin{scope}[shift={(0, -1.5)}]
    		\node[main,fill=blue!30] (1) at (0,0) {};
            \node[main,fill=blue!30] (2) at (2,0) {};
            \node[main,fill=blue!7] (3) at (4,0) {};
            \node[main] (4a) at (6, 0.5) {};
            \node[main,fill=blue!7] (4b) at (6, -0.5) {};
            \draw[->] (1) -- node[midway, above] {$\mathit{step}$} (2);
            \draw[->,line width=0.25mm] (2) -- node[midway, above] {$\mathit{step}$} (3);
            \draw[->] (3) to [out=5,in=190] node[midway, above, yshift=4pt, xshift=-2pt] {$\mathit{mock} \; v_1$} (4a);
            \draw[->,blue!50,line width=0.25mm] (3) to [out=5,in=170] node[midway, below, yshift=-4pt, xshift=-2pt] {$\mathit{mock} \; v_2$} (4b);
            \draw[->] ++(2, 0.6) -- (2);
    		\node[] (A0) at (2, 0.9) {$T_{curr}$};
    		\draw[->] ++(6, -1.1) -- (4b);
    		\node[] (A1) at (6, -1.3) {$T_{curr}'$};
			\node[anchor=west] (A2) at (6.5, 0) {After};
        \end{scope}
    \end{tikzpicture}
    \caption{Example showing the effect of the $prim(2, v_2)$ message on a tree with pre-existing edges. In this case there is one $step$ edge present, which is followed without any change. Next, a new node is connected using a $mock \; v_2$ edge.}
    \label{fig:prim-rule-pre-existing}
    \end{minipage}
    \hfill
    \begin{minipage}[t][5cm]{0.48\textwidth}
        \centering
        \begin{tikzpicture}[main/.style = {draw, circle, minimum size=0.5cm},scale=0.875]
        \begin{scope}[shift={(0, 0)}]
    		\node[main,fill=blue!7] (1) at (0,0) {};
            \node[main,fill=blue!7] (2) at (2,0) {};
            \node[main,fill=blue!30] (3a) at (4, 0.5) {};
            \node[main,fill=blue!7] (3b) at (4, -0.5) {};
            \node[main] (4a) at (6, 0.5) {};
            \node[main,fill=blue!30] (4b) at (6, -0.5) {};
            \draw[->,blue!50,line width=0.25mm] (1) -- node[midway, above] {$\mathit{step}$} (2);
            \draw[->] (2) to [out=5,in=190] node[midway, above, yshift=3pt, xshift=-4pt] {$\mathit{mock} \; v_1$} (3a);
            \draw[->,blue!50,line width=0.25mm] (2) to [out=5,in=170] node[midway, below, yshift=-3pt, xshift=-4pt] {$\mathit{mock} \; v_2$} (3b);
			\draw[->] (3a) -- node[midway, above] {$\mathit{step}$} (4a);
			\draw[->,blue!50,line width=0.25mm] (3b) -- node[midway, above] {$\mathit{step}$} (4b);
            \draw[->,blue!50,line width=0.25mm] (3a) to [out=150,in=40] node[midway, above, yshift=1pt, xshift=0pt] {$\mathit{reset}$} (1);
            \draw[->] ++(4, 1.1) -- (3a);
    		\node[] (A0) at (4, 1.4) {$T_{curr}$};
            \draw[->] ++(6, -1.1) -- (4b);
    		\node[] (A0) at (6, -1.3) {$T_{target}$};
        \end{scope}
    \end{tikzpicture}
    \caption{Example showing how to slide to a target that is not a descendant of $T_{curr}$. In this scenario, the server will first be reset and will then, from the start of the program follow the path to the destination node.}
    \label{fig:slide-reset}
    \end{minipage}
\end{figure}

\subparagraph*{Executed}
When $\mathit{executed}(c_{instr})$ arrives the \textsc{Executed} rule will take $c_{instr}$ steps forward in the tree starting from the current node $T_{curr}$.
It does this by using $c_{instr}$ $step$ operations as the sequence of operations. The $\mathit{executed}(c_{instr})$ message specifies that the executed path of $c_{instr}$ was fully deterministic, as a result the nodes that are traversed will only have one or zero children. If the node has a child it becomes the next node, if it does not, a new node is created and that node becomes the next node. In \cref{fig:executed-rule} we show the effect of receiving an $\mathit{executed}(2)$ message in a very basic tree. If $T_{curr}$ did not have any children, two new nodes will be attached an the last node becomes the new $T_{curr}$. If $T_{curr}$ did already have a child, the edge will simply be followed. For the second edge the same rule applies.

\subparagraph*{Prim}
Upon receiving $\mathit{prim}(c_{instr}, v)$ the client learns that $c_{instr}$ instructions were executed ending with a nondeterministic operation that returned value $v$. The new path consists of $c_{instr} - 1$ $\mathit{step}$ operations and one $\mathit{mock} \; v$ operation. Existing nodes are traversed, and new nodes are created when needed. In the case of $\mathit{mock} \; v$ each of the start node's existing children could have a different message on their edge. In this case the debugger will add a new child to this node and make that child the next node. This is illustrated in \cref{fig:prim-rule-pre-existing}. In this example there are pre-existing $\mathit{step}$ and $\mathit{mock} \; v_1$ edges but there is no $\mathit{mock} \; v_2$ edge.

\begin{figure}
	\begin{mathpar}
		\inferrule[(\textsc{Executed})]
		{
			\langle \mathit{step}^{c_{instr}}, T, T_{curr}  \rangle \Rightarrow_{Traverse}^*  \langle \varnothing, T', T_{curr}' \rangle
		}
		{
			\langle \mathit{executed}(c_{instr}) , \varnothing, T,T_{curr} \rangle
			\leadsto
			\langle \varnothing , \varnothing, T',T_{curr}' \rangle
		}\\
		\inferrule[(\textsc{Prim})]
		{
			\langle \mathit{step}^{c_{instr} - 1} : \mathit{mock} \; v, T, T_{curr} \rangle \Rightarrow_{Traverse}^*   \langle \varnothing, T', T_{curr}' \rangle
		}
		{
			\langle \mathit{prim}(c_{instr}, v) , \varnothing, T,T_{curr} \rangle
			\leadsto
			\langle \varnothing, \varnothing, T',T_{curr}' \rangle
		}\\

			\inferrule[(\textsc{Slide})]
       		{
				T_{curr} \vdash^* T_{target} \\
				T_{curr}, \varnothing \Rightarrow_{Path}^* T_{target}, msg^* \\
            }
            {
				\langle \mathit{slide} \; T_{target}, \varnothing, T,T_{curr} \rangle
				\leadsto
				\langle \varnothing, msg^*, T,T_{curr} \rangle
			}\\
			\inferrule[(\textsc{Slide-Restart})]
       		{
				\lnot (T_{curr} \vdash^* T_{target}) \\
				T, \varnothing \Rightarrow_{Path}^* T_{target}, msg^* \\
            }
            {
				\langle \mathit{slide} \; T_{target}, \varnothing, T,T_{curr} \rangle
				\leadsto
				\langle \varnothing, reset : msg^*, T,T \rangle
			}\\
			\inferrule[(\textsc{Concolic-Receive})]
			{
				s(K) = K_{cs} \\
				\mathit{concolic}(K_{cs}) = T_{conc} \\
				T' = \mathit{replace}(T, T_{curr}, T_{conc}) \\
            }
            {
				\langle \mathit{snapshot}(K), \varnothing, T,T_{curr} \rangle
				\leadsto
				\langle \varnothing, \varnothing, T',T_{conc} \rangle
			}
	\end{mathpar}
	\caption{Client-Side Multiverse Debugging operations.}
	\label{fig:client-rules}
\end{figure}

\subparagraph*{Slide}
Using the previous rules, the tree is automatically extended during the execution of a program and new branches can be added with the $mock$ operation. To easily navigate the multiverse tree users can move 
to any pre-existing state using the \textit{slide}\footnote{Slide is named after the popular 1995 sciencefiction TV series `Sliders', in which the main characters slide into a parallel universe each episode.} operation.

The slide operation consists of two rules. Either the user slides to a target state downstream of the current execution path (indicated by $T_{curr} \vdash^* T_{target}$). Or they slide into a state that cannot be reached by going forward in the current execution. When this happens the debugger will instruct the VM to restart the program and execute the selected path as shown in \cref{fig:slide-reset}. This approach differs from MIO~\cite{mio}, as MIO is able to step back in time instead of restarting the execution. In both rules the debugger will have to determine the necessary debugging operations needed to go from the current state (or the start state in case of \textsc{Slide-Restart}) to the target state. Thanks to the structure of the multiverse tree, this is simply the list of edge labels on the path to the target state. To obtain this list the $\Rightarrow_{Path}$ rule is used in the semantics. The exact definition of this rule can be found in \cref{appendix:path-rules}.

Interestingly, the slide rules do not directly change the current node in the tree to $T_{target}$.
Instead, they instruct the remote debugger to move to a particular location in the tree.
When doing so, the server will notify the client of any executed instructions and primitives using the $executed$ and $prim$ messages, which will in turn change the current node in the tree.
Because \textsc{Slide-Restart} resets the execution, it first changes the current node to the root node $T$.
The $\mathit{executed}$ and $\mathit{prim}$ messages then walk forward again in the tree.

\subparagraph*{Concolic-Receive} Finally, the tree can also be automatically extended with paths that result in full code coverage using \textsc{Concolic-Receive}. This is the topic of the next section.

\section{Suggesting Interesting Paths using Concolic Analysis}\label{sec:concolic}
\begin{figure}
	$$
	\begin{array}{ l l c l }
    	\emph{(Symbolic expression)}   & \hat{s}                    & \Coloneqq  & v \; | \; \hat{x} \; | \;  \hat{t.unop}(\hat{s}) \; | \; \hat{t.binop}(\hat{s}, \hat{s}) \; | \; ... \\
		\emph{(Symbolic environment)}  & \upvarepsilon              & \Coloneqq & \hat{x} \rightarrow v \\
		\emph{(Path condition)}         & \pi                        & \Coloneqq & \hat{s} \\
		\emph{(Concolic program state)}        & K_{cs} & \Coloneqq & \langle K,\hat{\rho},\hat{\updelta},\hat{st},\hat{\mu},\upvarepsilon,\pi \rangle \\
		\emph{(Concolic Tree)}         & \hat{T}                     & \Coloneqq & \mathit{Node} \; (tl, \hat{T})^* \; (\pi, \upvarepsilon)^* \\
    \end{array}
	$$
	\caption{Configuration of the concolic execution engine.}
	\label{fig:concolic-config}
\end{figure}

The $\mathit{mock}$ and $\mathit{slide}$ operations already allow the multiverse tree to be extended and navigated. That said, to find bugs the developer still needs to enumerate all possible sensor values. Yet many values result in the same execution path. Therefore, we propose using concolic execution to identify a subset of sensor values that provide full control flow coverage.

Concolic execution~\cite{godefroid05, king76} is a technique mostly used for automated testing and fuzzing~\cite{baldoniSurveySymbolicExecution2019} that generates inputs for programs so that each execution path of a program can be explored. 
Concolic execution is an iterative algorithm that explores a new execution path in every iteration. The algorithm derives a model, i.e. assignment of symbolic variables which satisfies a path condition.  In the first iteration, symbolic variables are given an arbitrary value. Using these values the program is executed concolically. While this happens the concolic executor will build up a path condition. At the end of this execution the algorithm will typically use an SMT solver to find a path different from the already explored paths. To do so the algorithm asks the SMT solver to solve the equation representing conjunction of the negations of all explored path conditions. Using this model a new iteration is started again until no new model can be found. An illustrative example of this algorithm can be found in \cref{appendix:example-concolic-execution}.

Here we show how the ideas of this technique can be used to generate not just a set of inputs for a program but a multiverse tree containing all future execution paths that could be explored starting from the current execution state.

\subsection{Concolic Execution in Debugging}

In order to integrate concolic execution into a live debugging session, we cannot follow the usual strategies of current concolic execution engines. 
Firstly, our analysis begins not from the start of the program or an existing concolic state (as in online concolic execution) but instead starts from the concrete execution state of a program running on a microcontroller.

Secondly, we are working with embedded systems where concolic execution cannot run locally.  As a result, the analysis must be performed remotely. 
To initiate the analysis from the current execution state of the microcontroller, the state needs to be transferred to the client performing the concolic execution.
The custom virtual machine enables this state transfer by sending an $inspect$ message to the debugging server, as previously described. 
When receiving this message the server serializes its state and transfers it to the client. However, this is purely a concrete state without any symbolic components. To start the analysis from this concrete state, the debugger first needs to transform it into a so-called concolic state in which every concrete element has a symbolic counterpart. 
Using the newly created concolic state, the analysis can start from the current execution state in the debugger.

For our debugger, we developed our own concolic execution engine for WebAssembly based on the semantics described in~\cite{marques22}.
To ensure that our theoretical model aligns with our practical implementation, which leverages an existing concrete execution engine during concolic execution, we adjusted the formalization. By doing so the concolic rules re-use the concrete $\hookrightarrow_i$ rules just like in the implementation.
This approach results in both server and client VMs using the same codebase for concrete execution. 
However, on the server side, we compile the code in a way that excludes the concolic execution functionality.

\subsection{Configuration of the Concolic Executor}
To start we describe the components of this execution engine shown in \cref{fig:concolic-config}. The core configuration for the concolic semantics is very similar to the core of the previously described debugger semantics. 
It consists of a collection of locals, globals, a stack and memory. These are part of the concrete state $K$. Unlike the concrete execution, each of these components now has a symbolic counterpart that we denote using $\hat{y}$ for a component $y$. The symbolic locals ($\hat{\rho}$), globals ($\hat{\updelta}$), stack ($\hat{st}$) and memory ($\hat{\mu}$) are all defined similarly to their concrete counterparts but now map an index to a symbolic expression instead of a concrete value. We previously defined concrete values as $v$, we now define symbolic expressions as $\hat{s}$. A symbolic expression can be a concrete value, a symbolic variable $\hat{x}$ or a unary, binary or tertiary operation on symbolic expressions. These are symbolic expressions and not values because it is not always possible to calculate the result of an expression when working symbolically.

Alongside the locals, globals, stack and memory the concolic execution engine also has some properties specifically needed for concolic execution. These are the symbolic environment ($\upvarepsilon$) and the path condition ($\pi$) stored in a concolic program state defined by $K_{cs}$.

Outside of the concolic state our concolic execution engine also keeps track of a multiverse tree which will later be added to the tree in the debugger. This tree denoted with $\hat{T}$ is very similar to the tree described in \cref{sec:remote-concolic-multiverse-debugging} but now has an additional component which we call the `$pathModels$'. This is a list of path conditions and their associated symbolic environment, denoted by $(\pi, \upvarepsilon)^*$. This additional information is used by the analysis to continually extend the current multiverse tree with new paths while merging paths that share a common prefix.

\subsection{Concolic Execution Semantics}
\label{sec:concolic-rules}

Using the previously defined components, we now define the concolic semantics in \cref{fig:concolic-semantics} (defined by $\Rightarrow_{cs}$). 
For brevity we do not list all rules here but give a representative subset.

The concolic semantics define a transition relation between two concolic states $K_{cs}$. 
All rules follow the following structure: $\langle K,  \hat{\rho},\hat{\updelta},\hat{st},\hat{\mu},\upvarepsilon, \pi   \rangle \Rightarrow_{cs} \langle K',  \hat{\rho}',\hat{\updelta}',\hat{st}',\hat{\mu}',\upvarepsilon', \pi'   \rangle$.

We first describe the rule used for simple binary arithmetic instructions, these instructions e.g. $i32.add$ simply take two elements from the stack, apply an operation to it, for example addition and place the result on the stack. In our concolic execution engine this is done by using the underlying concrete semantics to manipulate the concrete stack and applying the symbolic binary operation on two elements of the symbolic stack. In this case symbolic addition. This way we do not need to reimplement the concrete part of our interpreter for every instruction but can instead reuse the existing implementation.

\begin{figure}
	\begin{mathpar}
	\inferrule[(\textsc{Binop})]
       		{ 
		K =  \{ \rho,\updelta,st,\mu, t.binop : e^* \}  \\
		K  \hookrightarrow_i  K' \\
		\hat{st} = \hat{s_a} : \hat{s_b} : \hat{s}^* \\
     		\hat{st'} = \hat{t.binop}(\hat{s_a}, \hat{s_b}) : \hat{s}^*\\
            }
            {				
				\langle K,  \hat{\rho},\hat{\updelta},\hat{st},\hat{\mu},\upvarepsilon, \pi   \rangle
				\Rightarrow_{cs}
				\langle K',  \hat{\rho},\hat{\updelta},\hat{st'},\hat{\mu},\upvarepsilon, \pi   \rangle
			}
			\\
	\inferrule[(\textsc{Symbolic-Prim})]
       		{ 
		K =  \{ \rho,\updelta,v_a^* : st,\mu, \textbf{call} \; j : e^* \} \\
		K' =  \{ \rho,\updelta,v : st,\mu, e^* \} \\
		P^{In}(j) = p \\
                 \hat{x_d} = id(\textbf{call}\: j)    \\
		\hat{x_d} \in dom(\upvarepsilon) \\
		\upvarepsilon(\hat{x_d}) = v \\
            }
            {				
				\langle K,  \hat{\rho},\hat{\updelta},\hat{s}_a^* :\hat{st},\hat{\mu}, \upvarepsilon, \pi   \rangle
				\Rightarrow_{cs}
				\langle K',  \hat{\rho},\hat{\updelta},\hat{x_d} : \hat{st},\hat{\mu},\upvarepsilon, \pi   \rangle
			}
			\\
    \inferrule[(\textsc{Symbolic-Prim-Fresh})]
       		{
		K =  \{ \rho,\updelta,v_a^* : st,\mu, \textbf{call} \; j : e^* \} \\
		K' =  \{ \rho,\updelta,v : st,\mu, e^* \} \\
		P^{In}(j) = p \\
                 \hat{x_d} = id(\textbf{call}\: j)    \\
		\hat{x_d} \not\in dom(\upvarepsilon) \\
		v \in codom(p) \\
            }
            {				
				\langle K,  \hat{\rho},\hat{\updelta},\hat{s}_a^* :\hat{st},\hat{\mu},\upvarepsilon, \pi   \rangle
				\Rightarrow_{cs}
				\langle K',  \hat{\rho},\hat{\updelta},\hat{x_d} : \hat{st},\hat{\mu}, \upvarepsilon[\hat{x_d} \mapsto v], \pi   \rangle
			} \\
		\inferrule[(\textsc{If-True})]
       		{ 
				K =  \{ \rho,\updelta,v: st',\mu,  \textbf{if} \; tf \;  e_1^* \; \textbf{else} \; e_2^*  \textbf{ end} \}  \\
				K \hookrightarrow_i K' \\
            	\hat{st} = \hat{s}: \hat{st'}\\
				v \neq 0 \\
				\pi' = \pi \land (\hat{s} \neq 0)
            }
            {
				\langle K,  \hat{\rho},\hat{\updelta},\hat{st},\hat{\mu},\upvarepsilon, \pi   \rangle
				\Rightarrow_{cs}
				\langle K',  \hat{\rho},\hat{\updelta},\hat{st'},\hat{\mu},\upvarepsilon, \pi'   \rangle
			}\\
		\inferrule[(\textsc{If-False})]
       		{ 
				K =  \{ \rho,\updelta,v: st',\mu, \textbf{if } tf \;   e_1^* \textbf{ else } e_2^* \textbf{ end} \}  \\
				K \hookrightarrow_i K' \\
            	\hat{st} = \hat{s}: \hat{st'}\\
				v = 0 \\
				\pi' = \pi \land (\hat{s} = 0)
            }
            {
				\langle K,  \hat{\rho},\hat{\updelta},\hat{st},\hat{\mu},\upvarepsilon, \pi   \rangle
				\Rightarrow_{cs}
				\langle K',  \hat{\rho},\hat{\updelta},\hat{st'},\hat{\mu},\upvarepsilon, \pi'   \rangle
			}\\
		\end{mathpar}
	\caption{Subset of rules illustrating the operation of the concolic execution engine.}
	\label{fig:concolic-semantics}
\end{figure}

 \textsc{Symbolic-Prim-Fresh} is responsible for creating new symbolic variables. 
When an input primitive is called and the unique symbolic variable $\hat{x}_d$ associated with this call is not in the symbolic environment $\upvarepsilon$, a new symbolic variable with that name is created and is given an arbitrary concrete value $v$.  Note that we assume here that the $id$ function returns a different identifier for different uses of the same call. This symbolic variable is placed in $\upvarepsilon$.

\textsc{Symbolic-Prim} applies when the symbolic variable $\hat{x}_d$ already exists in the symbolic environment.
 In this scenario, the concrete value is read from $\upvarepsilon$. This concrete value is pushed on the concrete stack. The symbolic variable itself pushed on the symbolic stack.

The \textsc{If-True} rule is essential for building up many of the path conditions. Given that the current instruction is $ \textbf{if} \; tf \;  e_1^* \; \textbf{else} \; e_2^* \textbf{ end} $ and the condition on the stack is not $0$, the next instructions will be $e_1^*$. The concrete part of the state is updated in the same way as normal using the concrete semantics $\hookrightarrow_i$. The symbolic part of the state is updated by removing the expression representing the condition $\hat{s}$ from the stack and updating the path condition to show that the condition $\hat{s}$ is true.

The \textsc{If-False} rule works analogously except that the condition is false now and the branch will not be taken. Instead the instructions in the else case will be executed. The symbolic stack and path condition are updated accordingly to reflect that the condition is not true in this case.

\subsection{Generating Multiverse Trees}
A traditional concolic execution engine generates a series of models, each representing a unique execution path within the program. 
However, a multiverse debugger enables interactive exploration of these executions through a graph-based interface. 
For concolic multiverse debugging, the analysis must not produce a sequence of isolated models but instead construct a tree in which overlapping parts of different executions are merged.

This is what happens in the \textsc{Concolic-Receive} rule for the client shown in \cref{fig:client-rules}. It will, upon receiving a snapshot by first sending an $inspect$ message, take a concrete state and convert it into a concolic state using $s(K)$ (details in \cref{app:expansion-rules}). Then it uses the $\mathit{concolic}$ function described below to generate a tree of future execution paths. Finally, the resulting tree is attached to the current multiverse tree by replacing the current node with the root of the newly generated tree. 

The tree of future execution paths is built iteratively in the $\mathit{concolic}$ function described in \cref{alg:concolic}. 
To achieve this, the main loop of the analysis iteratively expands the multiverse tree after each concolic iteration using the $extendTree$ function.

\begin{algorithm}[t]
\caption{Concolic interpreter main loop}\label{alg:concolic}
\SetKwProg{Fn}{function}{}{}
\Fn{$concolic(K_{cs} = \langle K, \hat{\rho},\hat{\updelta},\hat{st},\hat{\mu}, \varepsilon,\pi \rangle)$}{
	$\Pi \gets true$\\
	$root \gets Node \; \varnothing \; \varnothing$\\
	
	\While{$\Pi \; is \; SAT \land belowLimit()$}{
	  $\upvarepsilon \gets Model(\Pi)$\\
	  $\langle K,\hat{\rho},\hat{\updelta},\hat{st},\hat{\mu}, \upvarepsilon, \pi  \rangle \Rightarrow_{cs}^n  \langle K',\hat{\rho}',\hat{\updelta}',\hat{st}',\hat{\mu}', \upvarepsilon', \pi' \rangle$\\
	  $root \gets \bf{extendTree}$$(root, \upvarepsilon', \pi',K)$\\
	  $\Pi \gets \Pi \land \lnot \pi'$\\
	}
	\Return withoutPathModels(root)
}

\Fn{$extendTree(\hat{T}_{curr}, \upvarepsilon, \pi, K)$}{
	$d \gets 0$ \\
	\While{$d < len(\upvarepsilon)$}{
		$\hat{T}_{prev} \gets \hat{T}_{curr}$ \\
		$K \hookrightarrow_{i} K'$\\
		\eIf{$\textsf{non-prim} \; K$}{
			$\hat{T}_{curr} \gets \hat{T}_{curr}.\mathit{attachOrFollow}(\mathit{step})$
		}{
			\eIf{$\mathit{findFirst} \; (\pi', \upvarepsilon') \; in \; \hat{T}_{curr}.pathModels$ $where \; Equivalent(\pi,\upvarepsilon,\pi',\upvarepsilon', d)$}{
				$\hat{T}_{curr} \gets \hat{T}_{curr}.\mathit{follow}(\mathit{mock} \; \upvarepsilon'[x_d])$
    		}{
				$\hat{T}_{curr} \gets \hat{T}_{curr}.\mathit{attach}(\mathit{mock} \; \upvarepsilon[x_d])$
    		}
			$d \gets d + 1$
		}
		$K \gets K'$ \\
		$\hat{T}_{prev}.pathModels.\mathit{add}((\pi, \upvarepsilon))$	
	}
	\Return $\hat{T}_{curr}$\\
}

\Fn{$Equivalent(\pi,\upvarepsilon,\pi',\upvarepsilon', d)$}{
	\Return $\pi(\upvarepsilon[x_{0..d} \mapsto \upvarepsilon'[x_{0..d}]]) \land$
	$\pi'(\upvarepsilon'[x_{0..d} \mapsto \upvarepsilon[x_{0..d}]])$
}
\end{algorithm}

To expand the existing tree the different symbolic values stored in the model $\upvarepsilon$ are used to walk through the existing multiverse tree. When walking through the tree, the algorithm adds or follows a path for every executed instruction. The pseudocode does this using the $\hookrightarrow_i$ relation\footnote{We use $\hookrightarrow_i$ here for simplicity, in the actual implementation the number of deterministic instructions between each choice point is simply stored in a counter, so there is no need to re-execute the program.}. If the instruction is deterministic, a node connected using a $step$ edge is added or followed. These edges never branch and simply indicate that there is a deterministic instruction that can be stepped over. If the edge does not exist yet, it is created, if it does exist, the edge is followed. If an instruction is nondeterministic, the algorithm will check if the path for this value and path condition is unique or not. If it is not unique, the algorithm will keep following the existing path. If it is unique, a new branch in the tree will be created.

To check if a path is unique with respect to a particular value, it is important to realise that a new path found by a concolic execution engine will
always differ from an existing path at some point. As such the tree building algorithm looks for the point where the new model's path condition branches off from one of the models in the current path (denoted by $\hat{T}_{curr}.pathModels$). This is the point at which the common prefix of two executions end.

To achieve this the algorithm will do a pairwise comparison between the models for each symbolic variable $x_d$ until a point where $x_d$ diverges. The algorithm does this by placing the first $d$ variables of the existing model in the new path condition and vice-versa. At some point one (or both) of these path conditions will no longer hold. At this point the two models diverge. From this point on the algorithm will attach new nodes to the tree, creating a new path, instead of following an existing path.

\subparagraph*{Tree construction example} To illustrate this algorithm, we will show how a multiverse tree can be built for the example program shown in \cref{lst:concolic-merging} which uses a simple loop with an if statement. For this program, we will give a few example models that can be found by concolic execution and how they would be added into the multiverse tree. 

\begin{figure}[]
\begin{lstlisting}[language=C, style=CStyle,escapechar=']
for (int i = 0; i < 3; i++) {
    if (chip_analog_read(sensorPin) < 5) {
        ...
    }
}
\end{lstlisting}
\caption{Example program to illustrate the operation of path merging.}
\label{lst:concolic-merging}
\end{figure}

The first concolic iteration could result in the model $\{x_0=4,x_1=4,x_2=4\} $ with path condition $\pi = x_0 < 5 \land x_1 < 5 \land x_2 < 5$, shown at the top of \cref{fig:building-the-tree}. After each concolic iteration, the algorithm will run the $extendTree$ method. At the start of the algorithm the root node has no children and has no associated models. The $extendTree$ function will check if there already exists a model that has a common prefix with the current model up to depth $d$. In every loop iteration this $d$ is increased, until the divergence point. As there are no models associated with the current node, it will result in a new path. When creating this path the model and path condition are also added into the $pathModels$ of the nodes.

The second concolic iteration then finds a different path with model $\{x_0=3,x_1=5,x_2=4\}$ and path condition $  \pi' = x_0 < 5 \land \lnot (x_1 < 5) \land x_2 < 5$. This model is shown below the first model in \cref{fig:building-the-tree}.
The $extendTree$ function will now follow the existing path up to the divergence point, at which point it will start a new branch by attaching new nodes.

Initially $d$ is $0$, to check if the branches diverge the algorithm will substitute $x_0$ from the current tree into the new path condition and vice-versa as shown in the $Equivalent$ function. If one of the path conditions does not hold using this different $x_0$, the branch has diverged. In this case $3 < 5 \land 4 < 5 \land 4 < 5$ holds and $4 < 5 \land \neg(5 < 5) \land 4 < 5$ also holds. This means the value of $x_0$ is valid for both path conditions which indicates these paths share a common prefix. Consequently, the algorithm can keep following the existing path. Then $d$ increases to $1$. At this point the algorithm will substitute both $x_0$ and $x_1$ from the original model into the new path condition and vice-versa. This results in $3 < 5 \land 5 < 5 \land 4 < 5 = 5 < 5$ and $4 < 5 \land \neg(4 < 5) \land 4 < 5 = \neg(4 < 5)$ which both do not hold. This indicates the paths have diverged and so the algorithm now attaches a new node instead of following the existing path. Then $d$ is increased to $2$. Since there are no $pathModels$ in this new branch, a new node is added. The end result is shown at the bottom of \cref{fig:building-the-tree}.

\begin{figure}
    \centering
    \begin{minipage}[t][6.75cm]{0.49\textwidth}
        \centering
        \begin{tikzpicture}[main/.style = {draw, circle, minimum size=0.5cm},scale=0.8,xscale=0.65]
		\begin{scope}
            \node[main, inner sep=2pt] (1) at (0,0) {$x_0$};
            \node[main, inner sep=2pt] (2) at (3,0) {$x_1$};
            \node[main, inner sep=2pt] (3) at (6,0) {$x_2$};
            \node[main, inner sep=2pt] (4) at (9,0) {$\phantom{x_3}$};
            \draw[->] (1) -- node[midway, above] {$4$} (2);
            \draw[->] (2) -- node[midway, above] {$4$} (3);
            \draw[->] (3) -- node[midway, above] {$4$} (4);
            \node[] (A0) at (4.5, 0.75) {$x_0 < 5 \land x_1 < 5 \land x_2 < 5$};
            \node[anchor=west] (A0) at (9.5, 0) {$M_1$};
        \end{scope}
        \begin{scope}[shift={(0, -1.5)}]
            \node[main, inner sep=2pt] (1) at (0,0) {$x_0$};
            \node[main, inner sep=2pt] (2) at (3,0) {$x_1$};
            \node[main, inner sep=2pt] (3) at (6,0) {$x_2$};
            \node[main, inner sep=2pt] (4) at (9,0) {$\phantom{x_3}$};
            \draw[->] (1) -- node[midway, above] {$3$} (2);
            \draw[->] (2) -- node[midway, above] {$5$} (3);
            \draw[->] (3) -- node[midway, above] {$4$} (4);
            \node[] (A0) at (4.5, 0.75) {$x_0 < 5 \land x_1 \geq 5 \land x_2 < 5$};
            \node[anchor=west] (A0) at (9.5, 0) {$M_2$};
        \end{scope}
        \begin{scope}[shift={(0, -3.5)}]
            \node[main, inner sep=2pt] (1) at (0,0) {$x_0$};
            \node[main, inner sep=2pt] (2) at (3,0) {$x_1$};
			\node[main, inner sep=2pt] (3a) at (6,0.5) {$x_2$};
            \node[main,fill=blue!20, inner sep=2pt] (3b) at (6,-0.5) {$x_2$};
			\node[main, inner sep=2pt] (4a) at (9,0.5) {$\phantom{x_3}$};
            \node[main,fill=blue!20, inner sep=2pt] (4b) at (9,-0.5) {$\phantom{x_3}$};
            \draw[->,line width=0.25mm] (1) -- node[midway, above] {$4$} (2);
            \draw[->] (2) to [out=5,in=185] node[midway, above, yshift=3pt, xshift=0pt] {$4$} (3a);
            \draw[->,blue!50,line width=0.25mm] (2) to [out=-5,in=175] node[midway, below, yshift=-3pt, xshift=0pt] {$5$} (3b);
            \draw[->] (3a) -- node[midway, above] {$4$} (4a);
            \draw[->,blue!50,line width=0.25mm] (3b) -- node[midway, below] {$4$} (4b);
           	\node[anchor=west] (A0) at (9.5, 0.5) {$M_1$};
            \node[anchor=west] (A0) at (9.5, -0.5) {$\textcolor{blue!50}{M_2}$};
			\node[] (A1) at (0, -1) {
				$\begin{bmatrix}
				M_1 \\
				\textcolor{blue!50}{M_2}
				\end{bmatrix}$
			};
			\node[] (A1) at (3, -1) {
				$\begin{bmatrix}
				M_1 \\
				\textcolor{blue!50}{M_2}
				\end{bmatrix}$
			};
            \node[] (A1) at (6, -1.2) {
				$\begin{bmatrix}
				\textcolor{blue!50}{M_2}
				\end{bmatrix}$
			};
			\node[] (A1) at (6, 1.2) {$[M_1]$};
        \end{scope}
    \end{tikzpicture}
	\caption{Example showing how equivalent parts of two models are joined to create a multiverse tree. Changes to the tree are in blue, and the $pathModels$ are shown in square brackets.}
    \label{fig:building-the-tree}    \end{minipage}
    \hfill
    \begin{minipage}[t][6.75cm]{0.49\textwidth}
        \centering
        \begin{tikzpicture}[main/.style = {draw, circle, minimum size=0.5cm}, scale=0.85, xscale=0.65]
    	\begin{scope}
            \node[main, inner sep=2pt] (1) at (0,0) {$x_0$};
            \node[main, inner sep=2pt] (2) at (3,0) {$x_1$};
            \node[main, inner sep=2pt] (3) at (6,0) {$x_2$};
            \node[main, inner sep=2pt] (4) at (9,0) {$\phantom{x_3}$};
            \draw[->] (1) -- node[midway, above] {$2$} (2);
            \draw[->] (2) -- node[midway, above] {$5$} (3);
            \draw[->] (3) -- node[midway, above] {$5$} (4);
            \node[] (A0) at (4.5, 0.75) {$x_0 < 5 \land x_1 \geq 5 \land x_2 \geq 5$};
            \node[anchor=west] (A0) at (9.5, 0) {$M_3$};
        \end{scope}
        \begin{scope}[shift={(0, -2.5)}]
            \node[main, inner sep=2pt] (1) at (0,0) {$x_0$};
            \node[main, inner sep=2pt] (2) at (3,0) {$x_1$};
			\node[main, inner sep=2pt] (3a) at (6,1) {$x_2$};
            \node[main, inner sep=2pt] (3b) at (6,-1) {$x_2$};
			\node[main, inner sep=2pt] (4a) at (9,1) {$\phantom{x_3}$};
            \node[main, inner sep=2pt] (4b) at (9,-0.5) {$\phantom{x_3}$};
            \node[main,fill=blue!20, inner sep=2pt] (4c) at (9,-1.5) {$\phantom{x_3}$};
            \draw[->,line width=0.25mm] (1) -- node[midway, above] {$4$} (2);
            \draw[->] (2) to [out=5,in=190] node[midway, above, yshift=4pt, xshift=-3pt] {$4$} (3a);
            \draw[->,line width=0.25mm] (2) to [out=-5,in=170] node[midway, below, yshift=-4pt, xshift=-3pt] {$5$} (3b);
            \draw[->] (3a) -- node[midway, above] {$4$} (4a);
            \draw[->] (3b) to [out=5,in=185] node[midway, above, yshift=3pt, xshift=0pt] {$4$} (4b);
            \draw[->,blue!50,line width=0.25mm] (3b) to [out=-5,in=175] node[midway, below, yshift=-3pt, xshift=0pt] {$5$} (4c);
            \node[anchor=west] (A0) at (9.5, 1) {$M_1$};
            \node[anchor=west] (A0) at (9.5, -0.5) {$M_2$};
            \node[anchor=west] (A0) at (9.5, -1.5) {\textcolor{blue!50}{$M_3$}};
            \node[] (A1) at (0, -1.25) {
				$\begin{bmatrix}
				M_1 \\
				M_2 \\
				\textcolor{blue!50}{M_3}
				\end{bmatrix}$
			};
            \node[] (A1) at (3, -1.25) {
				$\begin{bmatrix}
				M_1 \\
				M_2 \\
				\textcolor{blue!50}{M_3}
				\end{bmatrix}$
			};
            \node[] (A1) at (6, -2) {
				$\begin{bmatrix}
				M_2 \\
				\textcolor{blue!50}{M_3}
				\end{bmatrix}$
			};
			\node[] (A1) at (6, 1.7) {$[M_1]$};
		\end{scope}
    \end{tikzpicture}
	\caption{The tree after adding the third model. As shown, a new node was added for $x_2 = 5$ in the $M_2$ branch because $M_3$ was equivalent to $M_2$ with respect to $x_1$.}
    \label{fig:building-the-tree-2}
    \end{minipage}
\end{figure}

To further illustrate how the $pathModels$ are used, suppose the concolic execution engine finds a third model $\{x_0=2,x_1=5,x_2=5\}$ with path condition $\pi = x_0 < 5 \land \neg(x_1 < 5) \land \neg(x_2 < 5)$. This model is shown at the top of \cref{fig:building-the-tree-2}. When this model is used to extend the tree, the algorithm will see that the edge connecting the first and second node is equivalent with respect to $x_0$. When coming to the second node however, there are two branches, and two path models to choose from ($M_1$ and $M_2$ from the end result in \cref{fig:building-the-tree}). Only one of these path models is equivalent with respect to $x_1$, namely $M_2$. As a result, the algorithm will follow the edge $mock \; 5$ and attach a new node for $x_2 = 5$ since there does not exist an equivalent path in the $pathModels$ for this node ($[M_2]$). The resulting tree is shown at the bottom of \cref{fig:building-the-tree-2}.

The algorithm keeps repeating this process until all possible execution paths have been explored and the multiverse tree has been constructed. Afterwards, the associated models stored in each of the nodes are removed and the tree is returned.

\subsubsection{Infinite Loops}
Considering many microcontroller programs use an infinite loop, our algorithm only generates trees up to a limited depth. This depth can be configured by limiting the number of instructions in a concolic iteration or by limiting the number of symbolic variables in an iteration. In the pseudocode this can be achieved by limiting the number of $\Rightarrow_{cs}$ transitions, for brevity we only show the former. Aside from limiting the depth, it is also possible to limit the number of concolic iterations, this is handled using the $belowLimit()$ function in the pseudocode. All these options are configurable in the user interface, allowing the developer to analyse the program for a longer period of time if needed.

\subsection{Debugger Guarantees}
The formalisation of our technique follows previous works on multiverse debugging, specifically Voyager~\cite{torres19} and MIO~\cite{mio}, where the debugger semantics are defined in terms of the underlying language semantics.
This allows \emph{debugger correctness} with respect to the underlying language to be defined as the combination of soundness and completeness, as proposed by Lauwaerts et al.~\cite{mio}.
Soundness means that the debugger should only be able to explore states possible in the underlying language semantics.
Completeness states that every state possible in the underlying language semantics should be explorable in the debugger.

Soundness follows because the debugger's $step$ operation is defined in terms of the underlying language semantics, and the $mock$ operation can only mock values within the domain of the primitives.
Completeness also holds because each underlying language step ($\hookrightarrow_{i}$) has a corresponding debugger step ($\rightarrow$). 
For deterministic instructions, the $step$ message can be used. For nondeterministic primitives, the $mock$ message can be used to make the execution in the debugger follow the exact same path as the underlying language semantics.  
Additionally, even though concolic execution suggests only a subset of possible sensor inputs to reduce the state-space, the debugger remains complete because the user can still manually $mock$ other possible inputs.

A full proof, together with the formal definitions of soundness and completeness, is included in \cref{app:guarantee}. This proof differs from MIO due to our trace-based approach and the lack of reversibility. Due to this lack of reversibility we instead rely restarting the execution entirely to slide to a different universe. 

\section{Remote Concolic Multiverse Debugging in Practice}\label{sec:evaluation}

To validate how well our approach works in practice, we implemented a prototype on top of the WARDuino~\cite{lauwaerts24a} WebAssembly virtual machine.
We selected programs out of the standard example programs provided by Arduino\footnote{\url{https://github.com/arduino/arduino-examples/tree/main/examples}}, that use at least \emph{one analog} or \emph{digital input sensor} and have at least \emph{one if condition}. 
We excluded programs that only used output or used sensors that are currently not supported by the WARDuino virtual machine. 
All of the programs we selected are written with an infinite loop which is very common in microcontroller programs. 
We also investigated the state-space reduction on two more elaborate applications, a gesture detector for controlling a robot from the electronics forum electronicsforu.com  and a breakout game that we wrote ourselves. 
We evaluated for all programs how well our analysis was able to reduce the number of options presented to the user in comparison to the state-space shown by traditional multiverse debuggers.
Additionally we discuss the benefits of remote concolic multiverse debugging for each of these programs.

\subsection{State-space Reduction}
\label{appendix:example-programs}

\begin{table*}
\centering
\begin{tabular}{l S[table-format=1.1] r r | r r r r}
Program & {States} & Paths & Max opts & Loop iterations & Time (s) \\
\hline
arduino-crystal-ball & 32 & 11 & 8 & 2 & 0.283 \\                             
arduino-knock & 4096 & 2 & 2 & 1 & 0.038 \\                                   
arduino-touch-sensor-lamp & 4096 & 2 & 2 & 1 & 0.003 \\                       
arduino-switch & 4096 & 4 & 4 & 1 & 0.131 \\                                  
arduino-keyboard & 4096 & 5 & 5 & 1 & 0.116 \\                                
arduino-love-o-meter & 4096 & 4 & 4 & 1 & 1.172 \\							  
arduino-while-no-calibrate & 4096 & 3 & 3 & 1 & 0.063 \\                      
arduino-while* & 33554432 & 76 & 65 & 1 & 219.613 \\                          
arduino-knock-lock & 33554432 & 13 & 3 & 2 & 0.074 \\                         
arduino-zoetrope & 268435456 & 16 & 2 & 2 & 0.080 \\                          
\hline
gesture-robot & 1329227995784915872903807060280344576 & 31 & 2 & 1 & 0.106 \\ 
breakout & 399233084044151956813969209973956763668805430970279476309364211949286649982748318240784572758907890368640881602991649656625549381476314280797456433584484495323604418213761551729681918627066399886090164997382778091838902730878898388605362171524454751878181477200406028946841317268160892562856076099796409527278605494960759076784916422446271871145391866999805548212763222246676828598760892071038179814502510727747623556368593631835054902213439370817372357339955937546856274232357677693777689545591891728973416173348886163644903350963610044575631604269786233734989471537832854231209470034546477344877119101940726018732915614418402948719509270679375079654525623106166646842000235771760798823878229459247865679922617648576081806794776733695402348032190905056511798831636005476500612696383661139062605947521920016712747015023300934148902057804293592161253081272918237389877376177664087548874960911121381970294611560924151873555719584277856256 & 3 & 3 & 259 & 5.032 \\
\end{tabular}
\caption{Analysis of the state-space reduction algorithm showing the estimated number of states, the total number of execution paths found and the maximum number of options presented at one choice point in the debugger. For each of the programs we also provide the number of loop iterations analysed and the execution time. Full source links can be found in~\cref{app:sources}.}
\label{tbl:state-space-reduction}
\end{table*}

The data presented in \Cref{tbl:state-space-reduction} shows an overview of the state-space reduction for all programs that we tested. 
We under estimate the state space of a program by taking the path in the program with the most choice points and multiplying the number of options along this path (\textbf{States}).
This number estimates the number of states that would be presented to the user in a traditional multiverse debugger. 
We show the amount of unique paths (\textbf{Paths}) found by our analysis and the maximum number of options (\textbf{Max opts}) presented to the user at any given point in time. 
For programs that use state from previous iterations we executed the loop twice as this is the minimal iteration count needed to see an effect. For programs that do not use state from previous iterations we limited the analysis to just one loop iteration (\textbf{Loop iterations}). 
For each of these configurations we also give the time (\textbf{Time (s)}) it took to analyse the program.
As the data shows, our technique significantly reduces the state-space for each of the programs we tested\footnote{The arduino-while program has two variants, we discuss the reason behind this in more detail later.}.
In the following sections, we give an overview of these programs and show how remote concolic multiverse debugging helps to debug them.

\subsection{Low Complexity}

\begin{figure}
  \centering
  \includegraphics[scale=0.5,valign=c,width=0.39\linewidth]{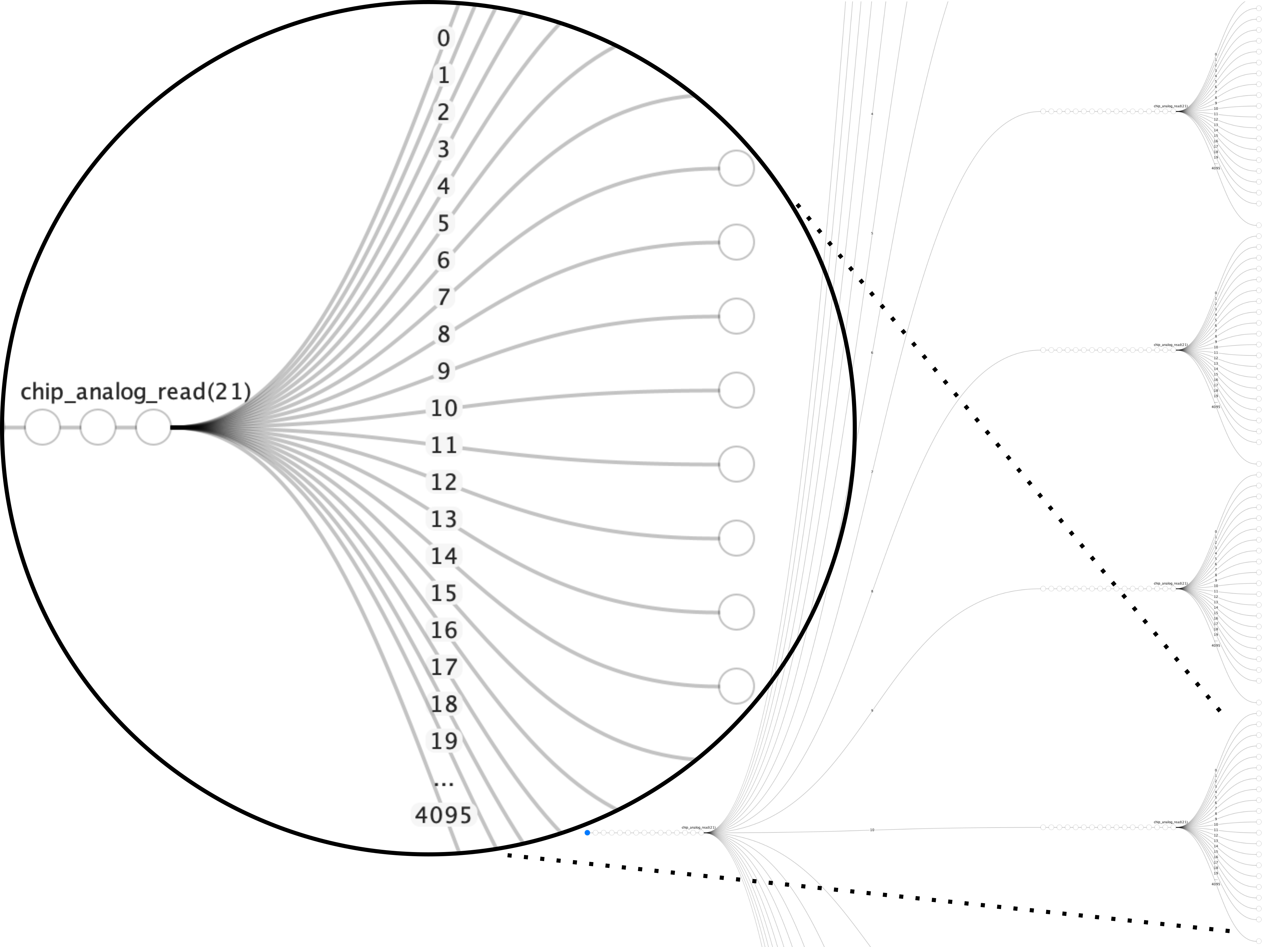}%
  \vrule
  \includegraphics[scale=0.3,valign=c,width=0.6\linewidth]{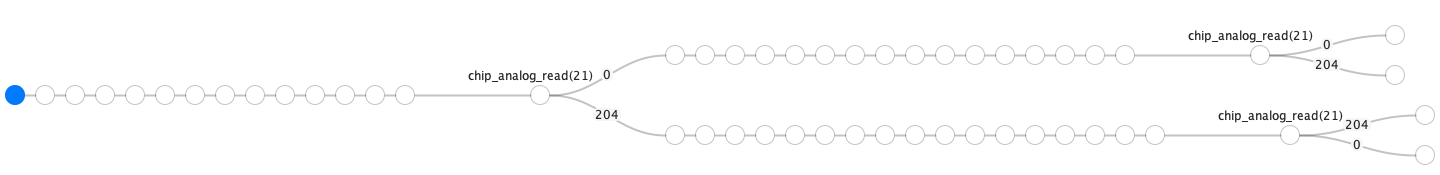}
  \caption{Left, before state-space reduction. Right, after state-space reduction.}
  \label{fig:compare-reduction}
\end{figure}

The \textbf{\href{https://github.com/arduino/arduino-examples/blob/main/examples/06.Sensors/Knock/Knock.ino}{arduino-knock}}, \textbf{\href{https://github.com/arduino/arduino-examples/blob/main/examples/10.StarterKit_BasicKit/p13_TouchSensorLamp/p13_TouchSensorLamp.ino}{arduino-touch-sensor-lamp}}, \textbf{\href{https://github.com/arduino/arduino-examples/blob/main/examples/05.Control/switchCase/switchCase.ino}{arduino-switch}}, \textbf{\href{https://github.com/arduino/arduino-examples/blob/main/examples/10.StarterKit_BasicKit/p07_Keyboard/p07_Keyboard.ino}{arduino-keyboard}} and \textbf{\href{https://github.com/arduino/arduino-examples/blob/main/examples/10.StarterKit_BasicKit/p03_LoveOMeter/p03_LoveOMeter.ino}{arduino-love-o-meter}} programs only have 2 to 5 paths. 
While the state-space for these programs is still relatively big, a quick inspection suffices to find good nondeterministic values to choose from.
The advantage of using our debugger over existing multiverse debuggers such as MIO~\cite{mio} is that the user can much more easily explore interesting paths. Exploring a new execution path traditionally requires manually adding each edge through individual $\mathit{mock}$ and $\mathit{step}$ operations. Additionally the user also has to decide which values they use when mocking from a large amount of options as visible in the left of \cref{fig:compare-reduction}. Using our approach, all future execution paths can be created automatically with a single button press. 
This results in a tree where each execution path has corresponding inputs to explore it as shown on the right in \cref{fig:compare-reduction}. 
In the \textbf{\href{https://github.com/arduino/arduino-examples/blob/main/examples/06.Sensors/Knock/Knock.ino}{arduino-knock}} program, this reduces the state-space from $4096^2$ paths to just $4$. 
The program uses a simple \texttt{if} and $2$ values are needed to explore both branches.
In two loop iterations this results in $4$ paths.
To explore an execution path, one simply clicks on a node in the path and slides to it, performing all the needed mocking and stepping operations along the way. This also allows users to easily explore sequences of values instead of just one singular value at a time as with mocking.
For example, using the slide operation on the created paths in the \textbf{\href{https://github.com/arduino/arduino-examples/blob/main/examples/10.StarterKit_BasicKit/p07_Keyboard/p07_Keyboard.ino}{arduino-keyboard}} program which implements a basic piano, sequences of notes can easily be explored. Furthermore, our analysis helps to point out strange behaviors in programs. For example, the \textbf{\href{https://github.com/arduino/arduino-examples/blob/main/examples/05.Control/switchCase/switchCase.ino}{arduino-switch}} program reads an analog value and maps it from 0 to 3 (inclusive) for use in a switch. Strangely this program has 5 execution paths instead of 4. This is due to the arduino \texttt{map} function which does not constrain the output when the input value is higher than the maximum. Our analysis points this out to the developer. Finally, our analysis also helps in cases where the value needed to mock is unclear, such as in the \textbf{\href{https://github.com/arduino/arduino-examples/blob/main/examples/10.StarterKit_BasicKit/p03_LoveOMeter/p03_LoveOMeter.ino}{arduino-love-o-meter}} where floating points calculations convert an analog value into celsius before performing branching. In this case our analysis calculates the right input values needed to get into the right branch after the conversion. Calculating the needed values by hand would take a significant effort.


\subsection{Medium Complexity}
The second set of programs have a higher path complexity ranging between 11 to 16 paths.
We found that for some of these programs the state-space reduction becomes significant, as it is no longer obvious which input values are needed for full coverage based on a quick inspection of the source code. 
The main reason is that these programs combine multiple sources of non-determism and keep track of state over multiple loop iterations.
This combination makes it much harder to reason about the possible combinations that can occur during execution. 

The \textbf{\href{https://github.com/arduino/arduino-examples/blob/main/examples/10.StarterKit_BasicKit/p11_CrystalBall/p11_CrystalBall.ino}{arduino-crystal-ball}} has 11 paths because the program implements a magic 8 ball using a random integer from 0 to 8 (exclusive) with a switch case and a button that needs to be pressed and released. This program illustrates how programs using the \texttt{random} function can be more easily debugged because the behaviour of this nondeterministic function can be controlled using the debugger.
The \textbf{\href{https://github.com/arduino/arduino-examples/blob/main/examples/10.StarterKit_BasicKit/p12_KnockLock/p12_KnockLock.ino}{arduino-knock-lock}} program implements a vault door that can be opened by knocking a specific pattern. It has 13 paths in two iterations, this is because it allows users to either not press a button or knock in 3 different ways, this results in 4 paths in the first iteration and $4 + 3 \cdot 3 = 13$ in the second iteration. This program opens the lock if the door is knocked on 3 times, as such a higher iteration count might be interesting in this program. This is configurable in the debugger. Using our analysis, the user can easily generate the needed values to trigger different kinds of knocks and explore the effect of knocking various different sequences without actually needing to manually knock each sequence.
Finally, the \textbf{\href{https://github.com/arduino/arduino-examples/blob/main/examples/10.StarterKit_BasicKit/p10_Zoetrope/p10_Zoetrope.ino}{arduino-zoetrope}} program implements a zoetrope which provides the illusion of moving images. It has 16 paths, this is because it has two buttons that can both be pressed or not pressed in each iteration resulting in 4 paths in the first iteration and 16 in the second. Given this program only branches on digital inputs, it is relatively easy to debug. However, this program also showcases an additional aspect of our debugger. To control the motor speed, the program directly uses an analog value read from a potentiometer. Since this is only a single execution path, only a single value is generated. However, using our manual mocking operation, other additional inputs can easily be explored allowing developers to reproduce bugs where a different speed is required. The same applies to many other programs where an input directly controls the output without branching.

\subsection{High Complexity}
We categorize three of the selected programs as high complexity, either due to the many paths generated or due to the larger code base making it more complex to understand quickly. 
These three programs are the arduino-while example, a gesture controlled robot and the breakout game.


The \href{https://github.com/arduino/arduino-examples/blob/main/examples/05.Control/WhileStatementConditional/WhileStatementConditional.ino}{\textbf{arduino-while}} program first calibrates a sensor at runtime when a button is pressed. 
Subsequently, it reads an analog value and runs it through a formula with hardcoded values. 
It then clamps the result between 0 and 255 using two if statements. 
Interestingly, this program generates an abnormally large number of paths.
The amount of states for this program are so excessive that we had to limit the number of concolic iterations for the analysis to finish in a reasonable time. 
This abnormally large number of paths can be attributed to the calibration step used in this program which first reads an analog value which later serves as a threshold in the program. 
Because this sensor value is used to determine the branching conditions, the concolic analysis can keep finding new paths by simply choosing a different sensor value in the calibration, resulting in a nearly infinite number of paths.

Our work overcomes this issue by letting the user first run the program and calibrate the sensor value and then debug the rest of the application given the chosen calibration. 
This way, once the user chooses the path that will be taken depending on this unknown value, the value is already chosen and only few options remain. 
Our work enables this by employing a dynamic approach that starts the analysis from the current program state in the debugger instead of using a static analysis that does not take the current program state into account.
To show the number of paths after choosing a calibration value, we also evaluated a version of this program without the calibration called arduino-while-no-calibrate which can be found in \cref{app:while-no-calibrate}. This is equivalent to choosing a calibration value in the debugger and then running the analysis.
As shown in \cref{tbl:state-space-reduction}, this drastically reduces the number of options, from a nearly infinite collection of options to just 3 paths once the calibration phase is over. These three paths are a result of the clamping operation, either the value is within the bounds, below the minimum or above the maximum value.

The \href{https://www.electronicsforu.com/electronics-projects/build-clap-gesture-controlled-robot}{\textbf{gesture-robot}} program implements a gesture detector to remotely control a robot.
The remote works using an accelerometer, analog values are read to get the acceleration on the x and y axis. 
Based on these values the remote can make the robot drive forwards, backwards, turn left, right or stay idle.

This program uses a chain of four conditional if else branches with an else at the end. 
Each if condition reads in a new x and y axis acceleration value, when both these values are within the required threshold the robot performs the associated action. Each if statement has 3 options, either both x and y are within bounds, only one is or none are. For two of these the program falls back to the next if statement which then again has 3 options, in the end this results in $1 + 2 \cdot (1 + 2 \cdot (1 + 2 \cdot (1 + 2))) = 31$ possible paths.
Each of these paths is correctly discovered by our concolic analysis.
The main complexity of the gesture-robot application stems from the way the program is written. 
The program reads in a new sensor value whenever a sensor value is needed instead of storing sensor values in local variables. 
As a result it is possible for sensor values to change in between reading values causing strange unpredictable behavior. 
Our debugger makes the excessive number of possibilities clear to the developer and allows them to take action. Additionally, besides helpful to point out these issues, the analysis allows the developer to easily test all the possible gestures without having to manually perform each action in the real world.

The \textbf{breakout game} listed in \cref{app:breakout} implements a simple breakout game where an analog joystick determines the position of the paddle. 
When analyzing this program for 259 loop iterations, the point where the ball will reach the bottom of the screen, the analysis determined that there are only three paths. 
One in which the paddle is on the left of the ball, one where it hits the paddle and one where it is on the right of the paddle. 
The analysis also found that in a large portion of the program, the position of the joystick does not matter, resulting in a single execution path. 
This is the case when the ball is not near the paddle and the paddle can thus not hit the ball. Using the provided paths, the developer can easily test certain game scenarios such as behavior at very high scores which might be very difficult to reach for a developer.
Due to the complexity and size of this program we believe this program showcases that the analysis can be effective over a larger program with a large number of iterations. Additionally, while the snapshot based approach of MIO~\cite{mio} works well for most programs, we noticed a large reduction in overhead when using our trace-based debugger for this particular program. In a test on the breakout program, it took around 1.9s to execute $50 000$ instructions without tracing. When using our trace-based approach the execution time went up to 2s. However, MIO's checkpointing policy (which in this case takes checkpoints every $\pm 2$ instructions) increased the execution time to around 5 minutes for this particular program. This shows that the overhead of our trace-based approach is significantly less than MIO for programs that perform very frequent IO operations. However, while the overhead of forward execution is significantly reduced, the time to step back has increased as it requires re-executing the program. A more in depth analysis is provided in \cref{app:performance}.

\section{Related Work}\label{sec:related}

While combining static analysis with debugging is a very novel space, there are many domains and specific works connected to our efforts.

\subparagraph*{Multiverse Debugging}
Current multiverse debuggers can be categorized along three dimensions. First, the execution model: offline models of the program or online concrete execution. Second, the way state explosion is handled: not at all, collapsing equivalent states, or reducing the number of paths. Finally, the way they manage the execution state: tracking all states, checkpointing with snapshots, or through traces.

The first multiverse debugger was introduced by Torres Lopez et al. \cite{torres19} in the context of parallel actor-based systems.
It was implemented in the prototype debugger Voyager~\cite{gurdeep19a} that operates directly on a language's operational semantics specified in PLT Redex~\cite{felleisen09}.
In terms of the three architectural dimensions, the multiverse debugger by Torres Lopez et al. \cite{torres19} operates over a model of the program execution, keeps track of all states in a program, and does not deal with the state explosion problem.
Subsequent research has extended multiverse debugging in all three of these areas.
Pasquier et al. \cite{pasquier22} were the first to address the state explosion problem, introducing user-defined reduction rules to quotient the state space in order to accelerate multiverse-wide breakpoint lookup. In subsequent work \cite{pasquier23}, they also introduce temporal breakpoints based on linear temporal logic.
In contrast, our work does not group together similar nodes, it instead reduces the number of paths in the multiverse by leveraging concolic execution. Both of these approaches are complementary and combining them could be highly beneficial to further improve the practical usability of multiverse debuggers.

The works by Pasquier et al. \cite{pasquier22, pasquier23} still operate over a model of the program, and keep track of all states in the quotient state space.
In contrast our work uses an online concrete execution model rather than a model of the program execution.
The work by Lauwaerts et. al.~\cite{mio} introduced the first multiverse debugger operating on concrete program executions, the MIO debugger. MIO does not track all program states but instead uses a checkpointing system.
The work addresses the challenge of performing Input-Output (I/O) operations with external effects during concrete multiverse debugging.
Their solution introduces deterministically reversible primitives with compensating actions, allowing the debugger to step back without creating inconsistent states.
However, the work does not deal with the state explosion problem.
In contrast to MIO which focusses on IO consistency, our work focusses on the orthogonal problem of state-space pruning.
We propose a novel concolic multiverse debugger that intelligently reduces the state-space of nondeterministic programs while maintaining full code coverage. Unlike MIO our work uses a trace-based approach to multiverse debugging instead of using checkpointing which trades reversibility for performance.
In particular, our trace-based approach significantly reduces the performance overhead of forward execution in comparison to MIO for programs that use a high frequency of IO operations. While the forward execution performance is improved, exploring alternative paths is slower since this approach requires restarting the execution instead of stepping back.

\subparagraph*{Exploring Execution Trees}
Exploration of program execution trees is not exclusive to multiverse debuggers.
Many analysis tools likewise explore execution trees of programs, such as software \emph{model checkers}~\cite{godefroid97,jhala09}, \emph{symbolic execution}~\cite{king76,cadar11,baldoni18}, and \emph{concolic execution}~\cite{godefroid05,sen06,marques22}.
While these methods excel at identifying program defects automatically, they require an explicit problem specification or program description, typically expressed as a formal model.
In contrast debuggers assist developers in finding errors that cannot be precisely formulated, or where the cause of the bug is unknown.
We believe that static analysis methods can substantially improve debugging tools by supplying developers with additional information.
This is the approach that we took in this paper, where we have focused on how \emph{concolic execution} can help reduce the state explosion problem~\cite{valmari98,kurshan98,kahlon09} by reducing the number of execution paths that the programmer needs to consider in the debugger.

\subparagraph*{Remote Debugging on Microcontrollers}
The prototype of our concolic multiverse debugger is implemented as a remote debugger on the WARDuino microcontroller virtual machine~\cite{lauwaerts24a}.
Remote debugging is the most widely used approach on embedded systems~\cite{potsch17,skvar-c24,soderby24} since it can mitigate some limitations of microcontrollers.
In remote debugging~\cite{rosenberg96}, a debugger frontend is connected to a remote debugger backend, or stub, running the program being debugged.
Instead of a stub, embedded debuggers also commonly rely on dedicated hardware such as the JTAG~\cite{shortm} hardware debugger~\cite{hogl06}.
However, remote debugging can exacerbate the probe effect~\cite{gait86}, and can be very slow since the debugger runs on the microcontroller, combined with constant communication overhead.
To address these limitations, a technique called out-of-place debugging, has been proposed~\cite{marra18,lauwaerts22}.
Using this technique, part of the debugger can run on a more powerful machine, reducing debugging interference and improving performance.
Our debugger, is currently not a full fledged out-of-place debugger, but inspired by these techniques we offload the static analysis to a more powerful machine. 
This approach is already sufficiently fast, but a speed-up can likely be achieved by adopting more ideas of out-of-place debugging to increase the amount of code that is offloaded. 

\subparagraph*{Multiverse Analysis}
Analyzing the multiverse of possibilities is more widely known as multiverse analysis, for instance within statistical analysis~\cite{steegen16}.
Within software development, several frameworks for exploratory programming~\cite{kery17a} allow developers to interact with the multiverse of source code versions~\cite{steinert12}.
Programmers actively explore the behavior of a program by experimenting with different code often through dedicated \emph{explore-first IDEs} with advanced version control~\cite{steinert12,kery17}.
While explore-first editors consider variations in the program code itself, multiverse debuggers focus on the various possible executions for a a single instance of a program.
The combination of these two techniques could result in a powerful development environment.

\subparagraph*{Formalizing Debuggers}
Only a limited collection of works consider formalizing the operation of debuggers, however in recent years the approach first used by \cite{bernstein95a} has become the most widely used~\cite{ferrari01, torres17, torres19, lauwaerts24a, holter24}.
It defines the operation of a debugger in terms of an operational semantics that encapsulates an underlying language semantics.
We have followed the same recipe here since it allows the formalism to explicitly show how the concrete WebAssembly execution relates to the concolic analysis in the debugger.

Recently Holter et al. \cite{holter24} introduced a novel abstract debugger that leverages static analysis to let developers explore abstract program states instead of concrete ones.
The work defines operational semantics for both an abstract and a concrete debugger.
The abstract debugger is proved sound with respect to the concrete one; in other words, every concrete debug session is guaranteed to correspond to an abstract session.
The converse does not hold because the over-approximations in the static analysis, so the abstract semantics may admit sessions that cannot occur concrete world.
In contrast our work leverages concolic execution, which eliminates the need for over-approximation. 
While the overall debugger is complete because of the existence of a manual $mock$ operation, we lose completeness of the concolically explored paths, as we must set bounds to analyse infinite programs.

\section{Conclusion}\label{sec:conclusion}
In this article, we introduced remote concolic multiverse debugging, a novel hybrid debugging approach which allows pruning the state-space of nondeterministic programs in multiverse debuggers.
We have demonstrated the power of this technique by applying it to a diverse set of resource-constrained microcontroller programs, highlighting its ability to improve the debugging process across a wide range of applications. 
To scale the approach to concrete executions on microcontrollers we developed a remote debugger (on top of the WARDuino VM) that uses a trace-based approach to multiverse debugging rather than the conventional snapshot-based approach taken by previous works.
We have formalized this approach and provide a proof of the soundness and completeness.
 At the heart of our debugging visualisation lies a novel algorithm designed to construct a multiverse of program executions from the results of concolic execution. 
This algorithm effectively enables a systematic exploration of all execution paths with our debugger.

Our work demonstrates that combining online static analysis with live debugging is the key to managing the large state spaces of complex programs. 
We believe this structured integration points toward a promising new direction for more effective debugging tools. 
As the first working example of this approach for microcontrollers, we hope our foundational research will inspire the exploration of a wider design space for debuggers, encouraging the community to pair other static analysis methods with advanced debugging techniques and ultimately advance the way we debug complex software.

\bibliographystyle{plainurl}
\bibliography{bibliography}

@article{mcdowell89,
	title        = {Debugging Concurrent Programs},
	author       = {McDowell, Charles E. and Helmbold, David P.},
	year         = 1989,
	month        = dec,
	journal      = {ACM Comput. Surv.},
	volume       = 21,
	number       = 4,
	pages        = {593--622},
	doi          = {10.1145/76894.76897},
	issn         = {0360-0300}
}

@phdthesis{gurdeep22,
	title        = {Taming Nondeterminism : Programming Language Abstractions and Tools for Dealing with Nondeterministic Programs},
	author       = {Gurdeep Singh, Robbert},
	year         = 2022,
	school       = {Ghent University}
}

@inproceedings{torres19,
	title        = {Multiverse {{Debugging}}: {{Non-Deterministic Debugging}} for {{ Non-Deterministic Programs}} ({{Brave New Idea Paper}})},
	author       = {Torres Lopez, Carmen and Gurdeep Singh, Robbert and Marr, Stefan and Gonzalez Boix, Elisa and Scholliers, Christophe},
	year         = 2019,
	booktitle    = {{{DROPS-IDN}}/v2/Document/10.4230/{{LIPIcs}}.{{ECOOP}}.2019.27},
	publisher    = {Schloss Dagstuhl -- Leibniz-Zentrum f{\"u}r Informatik},
	doi          = {10.4230/LIPIcs.ECOOP.2019.27}
}

@inproceedings{pasquier22,
	title        = {Practical Multiverse Debugging through User-Defined Reductions: Application to {{UML}} Models},
	author       = {Pasquier, Matthias and Teodorov, Ciprian and Jouault, Fr{\'e}d{\'e}ric and Brun, Matthias and Roux, Luka Le and Lagadec, Lo{\"i}c},
	year         = 2022,
	month        = oct,
	booktitle    = {Proceedings of the 25th {{International Conference}} on {{Model Driven Engineering Languages}} and {{Systems}}},
	publisher    = {Association for Computing Machinery},
	address      = {New York, NY, USA},
	series       = {{{MODELS}} '22},
	pages        = {87--97},
	doi          = {10.1145/3550355.3552447},
	isbn         = {978-1-4503-9466-6}
}

@inproceedings{pasquier23,
	title        = {Temporal {{Breakpoints}} for {{Multiverse Debugging}}},
	author       = {Pasquier, Matthias and Teodorov, Ciprian and Jouault, Fr{\'e}d{\'e}ric and Brun, Matthias and Le Roux, Luka and Lagadec, Lo{\"i}c},
	year         = 2023,
	month        = oct,
	booktitle    = {Proceedings of the 16th {{ACM SIGPLAN International Conference} } on {{Software Language Engineering}}},
	publisher    = {Association for Computing Machinery},
	address      = {New York, NY, USA},
	series       = {{{SLE}} 2023},
	pages        = {125--137},
	doi          = {10.1145/3623476.3623526},
	isbn         = 9798400703966
}

@inproceedings{pasquier23a,
	title        = {Debugging {{Paxos}} in the {{UML Multiverse}}},
	author       = {Pasquier, Matthias and Teodorov, Ciprian and Jouault, Fr{\'e}d{\'e}ric and Brun, Matthias and Lagadec, Lo{\"i}c},
	year         = 2023,
	month        = oct,
	booktitle    = {2023 {{ACM}}/{{IEEE International Conference}} on {{Model Driven Engineering Languages}} and {{Systems Companion}} ({{ MODELS-C}})},
	pages        = {811--820},
	doi          = {10.1109/MODELS-C59198.2023.00130}
}

@article{gait86,
	title        = {A Probe Effect in Concurrent Programs},
	author       = {Gait, Jason},
	year         = 1986,
	journal      = {Software: Practice and Experience},
	volume       = 16,
	number       = 3,
	pages        = {225--233},
	doi          = {10.1002/spe.4380160304},
	issn         = {1097-024X}
}

@book{zeller05,
  title        = {Why {{Programs Fail}}: {{A Guide}} to {{Systematic
		              Debugging}}},
  shorttitle   = {Why {{Programs Fail}}},
  author       = {Zeller, Andreas},
  year         = {2005},
  month        = sep,
  publisher    = {Morgan Kaufmann Publishers Inc.},
  address      = {San Francisco, CA, USA},
  isbn         = {978-1-55860-866-5}
}

@inproceedings{khan11,
  title		= {Limitations of Simulation Tools for Large-Scale Wireless
		  Sensor Networks},
  booktitle	= {2011 {{IEEE}} Workshops of International Conference on
		  Advanced Information Networking and Applications},
  author	= {Khan, Muhammad Zahid and Askwith, Bob and Bouhafs, Faycal
		  and Asim, Muhammad},
  year		= {2011},
  pages		= {820--825},
  publisher	= {IEEE},
  address	= {New York, NY, USA},
  doi		= {10.1109/WAINA.2011.59}
}

@article{roska90,
  title		= {Limitations and Complexity of Digital Hardware Simulators
		  Used for Large-Scale Analogue Circuit and System Dynamics},
  author	= {Roska, Tam{\'a}s},
  year		= {1990},
  journal	= {International Journal of Circuit Theory and Applications},
  volume	= {18},
  number	= {1},
  pages		= {11--21},
  issn		= {1097-007X},
  doi		= {10.1002/cta.4490180104},
  urldate	= {2024-08-31},
  copyright	= {Copyright {\copyright} 1990 John Wiley \& Sons, Ltd.},
  langid	= {english}
}

@article{perscheid17,
  title		= {Studying the Advancement in Debugging Practice of
		  Professional Software Developers},
  author	= {Perscheid, Michael and Siegmund, Benjamin and Taeumel,
		  Marcel and Hirschfeld, Robert},
  year		= {2017},
  month		= mar,
  journal	= {Software Quality Journal},
  volume	= {25},
  number	= {1},
  pages		= {83--110},
  issn		= {1573-1367},
  doi		= {10.1007/s11219-015-9294-2},
  urldate	= {2025-01-13},
  langid	= {english}
}

@inproceedings{makhshari21,
  title		= {{{IoT Bugs}} and {{Development Challenges}}},
  booktitle	= {2021 {{IEEE}}/{{ACM}} 43rd {{International Conference}} on
		  {{Software Engineering}} ({{ICSE}})},
  author	= {Makhshari, Amir and Mesbah, Ali},
  year		= {2021},
  month		= may,
  pages		= {460--472},
  issn		= {1558-1225},
  doi		= {10.1109/ICSE43902.2021.00051},
  urldate	= {2025-01-13}
}

@article{lauwaerts24a,
	title        = {{{WARDuino}}: {{An}} Embedded {{WebAssembly}} Virtual Machine},
	author       = {Lauwaerts, Tom and Singh, Robbert Gurdeep and Scholliers, Christophe},
	year         = 2024,
	month        = feb,
	journal      = {Journal of Computer Languages},
	pages        = 101268,
	doi          = {10.1016/j.cola.2024.101268},
	issn         = {2590-1184}
}

@inproceedings{haas17,
	title        = {Bringing the Web up to Speed with {{WebAssembly}}},
	author       = {Haas, Andreas and Rossberg, Andreas and Schuff, Derek L. and Titzer, Ben L. and Holman, Michael and Gohman, Dan and Wagner, Luke and Zakai, Alon and Bastien, {\relax JF}},
	year         = 2017,
	month        = jun,
	booktitle    = {Proceedings of the 38th {{ACM SIGPLAN Conference}} on {{ Programming Language Design}} and {{Implementation}}},
	publisher    = {Association for Computing Machinery},
	address      = {New York, NY, USA},
	series       = {{{PLDI}} 2017},
	pages        = {185--200},
	doi          = {10.1145/3062341.3062363},
	isbn         = {978-1-4503-4988-8}
}

@article{gurdeep19a,
	title        = {Multiverse {{Debugging}}: {{Non-Deterministic Debugging}} for {{ Non-Deterministic Programs}} ({{Artifact}})},
	author       = {Gurdeep Singh, Robbert and Torres Lopez, Carmen and Marr, Stefan and Gonzalez Boix, Elisa and Scholliers, Christophe},
	year         = 2019,
	journal      = {Dagstuhl Artifacts Series},
	publisher    = {Schloss Dagstuhl -- Leibniz-Zentrum f{\"u}r Informatik},
	volume       = 5,
	number       = 2,
	pages        = {4:1--4:3},
	doi          = {10.4230/DARTS.5.2.4},
	issn         = {2509-8195}
}

@book{felleisen09,
	title        = {Semantics Engineering with {{PLT}} Redex},
	author       = {Felleisen, Matthias and Findler, Robert Bruce and Flatt, Matthew},
	year         = 2009,
	publisher    = {Mit Press}
}

@article{steegen16,
	title        = {Increasing {{Transparency Through}} a {{Multiverse Analysis}}},
	author       = {Steegen, Sara and Tuerlinckx, Francis and Gelman, Andrew and Vanpaemel, Wolf},
	year         = 2016,
	month        = sep,
	journal      = {Perspectives on Psychological Science},
	publisher    = {SAGE Publications Inc},
	volume       = 11,
	number       = 5,
	pages        = {702--712},
	doi          = {10.1177/1745691616658637},
	issn         = {1745-6916}
}

@inproceedings{kery17a,
	title        = {Exploring Exploratory Programming},
	author       = {Kery, Mary Beth and Myers, Brad A.},
	year         = 2017,
	month        = oct,
	booktitle    = {2017 {{IEEE Symposium}} on {{Visual Languages}} and {{ Human-Centric Computing}} ({{VL}}/{{HCC}})},
	pages        = {25--29},
	doi          = {10.1109/VLHCC.2017.8103446},
	issn         = {1943-6106}
}

@article{steinert12,
	title        = {{{CoExist}}: Overcoming Aversion to Change},
	author       = {Steinert, Bastian and Cassou, Damien and Hirschfeld, Robert},
	year         = 2012,
	month        = oct,
	journal      = {SIGPLAN Not.},
	volume       = 48,
	number       = 2,
	pages        = {107--118},
	doi          = {10.1145/2480360.2384591},
	issn         = {0362-1340}
}

@inproceedings{kery17,
	title        = {Variolite: {{Supporting Exploratory Programming}} by {{Data Scientists}}},
	author       = {Kery, Mary Beth and Horvath, Amber and Myers, Brad},
	year         = 2017,
	month        = may,
	booktitle    = {Proceedings of the 2017 {{CHI Conference}} on {{Human Factors}} in {{Computing Systems}}},
	publisher    = {Association for Computing Machinery},
	address      = {New York, NY, USA},
	series       = {{{CHI}} '17},
	pages        = {1265--1276},
	doi          = {10.1145/3025453.3025626},
	isbn         = {978-1-4503-4655-9}
}

@inproceedings{godefroid97,
	title        = {Model Checking for Programming Languages Using {{VeriSoft}}},
	author       = {Godefroid, Patrice},
	year         = 1997,
	month        = jan,
	booktitle    = {Proceedings of the 24th {{ACM SIGPLAN-SIGACT}} Symposium on {{ Principles}} of Programming Languages},
	publisher    = {Association for Computing Machinery},
	address      = {New York, NY, USA},
	series       = {{{POPL}} '97},
	pages        = {174--186},
	doi          = {10.1145/263699.263717},
	isbn         = {978-0-89791-853-4}
}

@article{jhala09,
	title        = {Software Model Checking},
	author       = {Jhala, Ranjit and Majumdar, Rupak},
	year         = 2009,
	month        = oct,
	journal      = {ACM Comput. Surv.},
	volume       = 41,
	number       = 4,
	pages        = {21:1--21:54},
	doi          = {10.1145/1592434.1592438},
	issn         = {0360-0300}
}

@article{king76,
	title        = {Symbolic Execution and Program Testing},
	author       = {King, James C.},
	year         = 1976,
	month        = jul,
	journal      = {Communications of the ACM},
	volume       = 19,
	number       = 7,
	pages        = {385--394},
	doi          = {10.1145/360248.360252},
	issn         = {0001-0782}
}

@inproceedings{cadar11,
	title        = {Symbolic Execution for Software Testing in Practice: Preliminary Assessment},
	author       = {Cadar, Cristian and Godefroid, Patrice and Khurshid, Sarfraz and P {\u a}s{\u a}reanu, Corina S. and Sen, Koushik and Tillmann, Nikolai and Visser, Willem},
	year         = 2011,
	month        = may,
	booktitle    = {Proceedings of the 33rd {{International Conference}} on {{ Software Engineering}}},
	publisher    = {Association for Computing Machinery},
	address      = {New York, NY, USA},
	series       = {{{ICSE}} '11},
	pages        = {1066--1071},
	doi          = {10.1145/1985793.1985995},
	isbn         = {978-1-4503-0445-0}
}

@article{baldoni18,
	title        = {A {{Survey}} of {{Symbolic Execution Techniques}}},
	author       = {Baldoni, Roberto and Coppa, Emilio and D'elia, Daniele Cono and Demetrescu, Camil and Finocchi, Irene},
	year         = 2018,
	month        = may,
	journal      = {ACM Comput. Surv.},
	volume       = 51,
	number       = 3,
	pages        = {50:1--50:39},
	doi          = {10.1145/3182657},
	issn         = {0360-0300}
}

@article{godefroid05,
	title        = {{{DART}}: Directed Automated Random Testing},
	author       = {Godefroid, Patrice and Klarlund, Nils and Sen, Koushik},
	year         = 2005,
	month        = jun,
	journal      = {SIGPLAN Not.},
	volume       = 40,
	number       = 6,
	pages        = {213--223},
	doi          = {10.1145/1064978.1065036},
	issn         = {0362-1340}
}

@inproceedings{sen06,
	title        = {Automated {{Systematic Testing}} of {{Open Distributed Programs}}},
	author       = {Sen, Koushik and Agha, Gul},
	year         = 2006,
	booktitle    = {Fundamental {{Approaches}} to {{Software Engineering}}},
	publisher    = {Springer},
	address      = {Berlin, Heidelberg},
	pages        = {339--356},
	doi          = {10.1007/11693017_25},
	isbn         = {978-3-540-33094-3},
	editor       = {Baresi, Luciano and Heckel, Reiko}
}

@inproceedings{marques22,
	title        = {Concolic {{Execution}} for {{WebAssembly}}},
	author       = {Marques, Filipe and Fragoso Santos, Jos{\'e} and Santos, Nuno and Ad{\~a}o, Pedro},
	year         = 2022,
	booktitle    = {36th {{European Conference}} on {{Object-Oriented Programming}} ({{ECOOP}} 2022)},
	publisher    = {Schloss Dagstuhl -- Leibniz-Zentrum f{\"u}r Informatik},
	address      = {Dagstuhl, Germany},
	series       = {Leibniz {{International Proceedings}} in {{Informatics}} ({{LIPIcs }})},
	volume       = 222,
	pages        = {11:1--11:29},
	doi          = {10.4230/LIPIcs.ECOOP.2022.11},
	isbn         = {978-3-95977-225-9},
	issn         = {1868-8969},
	editor       = {Ali, Karim and Vitek, Jan}
}

@incollection{valmari98,
	title        = {The State Explosion Problem},
	author       = {Valmari, Antti},
	year         = 1998,
	booktitle    = {Lectures on {{Petri Nets I}}: {{Basic Models}}: {{Advances}} in {{Petri Nets}}},
	publisher    = {Springer},
	address      = {Berlin, Heidelberg},
	series       = {Lecture {{Notes}} in {{Computer Science}}},
	pages        = {429--528},
	doi          = {10.1007/3-540-65306-6_21},
	isbn         = {978-3-540-49442-3},
	editor       = {Reisig, Wolfgang and Rozenberg, Grzegorz}
}

@inproceedings{kurshan98,
	title        = {Static Partial Order Reduction},
	author       = {Kurshan, R. and Levin, V. and Minea, M. and Peled, D. and Yenig{\" u}n, H.},
	year         = 1998,
	booktitle    = {Tools and {{Algorithms}} for the {{Construction}} and {{ Analysis}} of {{Systems}}},
	publisher    = {Springer},
	address      = {Berlin, Heidelberg},
	pages        = {345--357},
	doi          = {10.1007/BFb0054182},
	isbn         = {978-3-540-69753-4},
	editor       = {Steffen, Bernhard}
}

@inproceedings{kahlon09,
	title        = {Monotonic {{Partial Order Reduction}}: {{An Optimal Symbolic Partial Order Reduction Technique}}},
	author       = {Kahlon, Vineet and Wang, Chao and Gupta, Aarti},
	year         = 2009,
	booktitle    = {Computer {{Aided Verification}}},
	publisher    = {Springer},
	address      = {Berlin, Heidelberg},
	pages        = {398--413},
	doi          = {10.1007/978-3-642-02658-4_31},
	isbn         = {978-3-642-02658-4},
	editor       = {Bouajjani, Ahmed and Maler, Oded}
}

@book{rosenberg96,
	title        = {How Debuggers Work: Algorithms, Data Structures, and Architecture},
	author       = {Rosenberg, Jonathan B.},
	year         = 1996,
	month        = oct,
	publisher    = {John Wiley \& Sons, Inc.},
	address      = {USA},
	isbn         = {978-0-471-14966-8}
}

@inproceedings{potsch17,
	title        = {Advanced Remote Debugging of {{LoRa-enabled IoT}} Sensor Nodes},
	author       = {P{\"o}tsch, Albert and Haslhofer, Florian and Springer, Andreas},
	year         = 2017,
	month        = oct,
	booktitle    = {Proceedings of the {{Seventh International Conference}} on the {{Internet}} of {{Things}}},
	publisher    = {Association for Computing Machinery},
	address      = {New York, NY, USA},
	series       = {{{IoT}} '17},
	pages        = {1--2},
	doi          = {10.1145/3131542.3140259},
	isbn         = {978-1-4503-5318-2}
}

@inproceedings{skvar-c24,
	title        = {In-{{Field Debugging}} of {{Automotive Microcontrollers}} for {{ Highest System Availability}}},
	author       = {Skvar{\v c} Bo{\v z}i{\v c}, Ga{\v s}per and Irigoyen Ceberio, Ibai and Mayer, Albrecht},
	year         = 2024,
	month        = sep,
	booktitle    = {Proceedings of the 2nd {{ACM International Workshop}} on {{ Future Debugging Techniques}}},
	publisher    = {Association for Computing Machinery},
	address      = {New York, NY, USA},
	series       = {{{DEBT}} 2024},
	pages        = {2--8},
	doi          = {10.1145/3678720.3685314},
	isbn         = 9798400711107
}

@misc{soderby24,
	title        = {Debugging with the Arduino {{IDE}} 2.0},
	author       = {S{\"o}derby, Karl and De Feo, Ubi},
	year         = 2024,
	month        = nov,
	howpublished = {https://docs.arduino.cc/software/ide-v2/tutorials/ide-v2-debugger}
}

@article{shortm,
	title        = {{{IEEE Standard}} for {{Test Access Port}} and {{Boundary-Scan Architecture}}},
	author       = {},
	year         = 2013,
	month        = may,
	journal      = {IEEE Std 1149.1-2013 (Revision of IEEE Std 1149.1-2001)},
	pages        = {1--444},
	doi          = {10.1109/IEEESTD.2013.6515989}
}

@article{hogl06,
	title        = {Open On-Chip Debugger--Openocd--},
	author       = {H{\"o}gl, Hubert and Rath, Dominic},
	year         = 2006,
	journal      = {Fakultat fur Informatik, Tech. Rep},
	publisher    = {Citeseer}
}

@article{marra18,
	title        = {Out-{{Of-Place}} Debugging: A Debugging Architecture to Reduce Debugging Interference},
	author       = {Marra, Matteo and Polito, Guillermo and Gonzalez Boix, Elisa},
	year         = 2018,
	month        = nov,
	journal      = {The Art, Science, and Engineering of Programming},
	volume       = 3,
	number       = 2,
	pages        = {3:1--3:29},
	doi          = {10.22152/programming-journal.org/2019/3/3},
	issn         = {2473-7321}
}

@inproceedings{lauwaerts22,
	title        = {Event-{{Based Out-of-Place Debugging}}},
	author       = {Lauwaerts, Tom and Castillo, Carlos Rojas and Singh, Robbert Gurdeep and Marra, Matteo and Scholliers, Christophe and Gonzalez Boix, Elisa},
	year         = 2022,
	month        = nov,
	booktitle    = {Proceedings of the 19th {{International Conference}} on {{ Managed Programming Languages}} and {{Runtimes}}},
	publisher    = {Association for Computing Machinery},
	address      = {New York, NY, USA},
	series       = {{{MPLR}} '22},
	pages        = {85--97},
	doi          = {10.1145/3546918.3546920},
	isbn         = {978-1-4503-9696-7}
}

@article{bernstein95a,
	title        = {Operational {{Semantics}} of a {{Focusing Debugger}}},
	author       = {Bernstein, Karen L. and Stark, Eugene W.},
	year         = 1995,
	month        = jan,
	journal      = {Electronic Notes in Theoretical Computer Science},
	series       = {{{MFPS XI}}, {{Mathematical Foundations}} of {{Programming Semantics}}, {{Eleventh Annual Conference}}},
	volume       = 1,
	pages        = {13--31},
	doi          = {10.1016/S1571-0661(04)80002-1},
	issn         = {1571-0661}
}

@inproceedings{ferrari01,
	title        = {A Debugging Calculus for Mobile Ambients},
	author       = {Ferrari, GianLuigi and Tuosto, Emilio},
	year         = 2001,
	month        = mar,
	booktitle    = {Proceedings of the 2001 {{ACM}} Symposium on {{Applied}} Computing},
	publisher    = {ACM},
	address      = {Las Vegas Nevada USA},
	doi          = {10.1145/372202.380701},
	isbn         = {978-1-58113-287-8}
}

@inproceedings{torres17,
	title        = {A Principled Approach towards Debugging Communicating Event-Loops},
	author       = {Torres Lopez, Carmen and Boix, Elisa Gonzalez and Scholliers, Christophe and Marr, Stefan and M{\"o}ssenb{\"o}ck, Hanspeter},
	year         = 2017,
	month        = oct,
	booktitle    = {Proceedings of the 7th {{ACM SIGPLAN International Workshop}} on {{Programming Based}} on {{Actors}}, {{Agents}}, and {{ Decentralized Control}}},
	publisher    = {Association for Computing Machinery},
	address      = {New York, NY, USA},
	series       = {{{AGERE}} 2017},
	pages        = {41--49},
	doi          = {10.1145/3141834.3141839},
	isbn         = {978-1-4503-5516-2}
}

@inproceedings{holter24,
	title        = {Abstract {{Debuggers}}: {{Exploring Program Behaviors}} Using {{ Static Analysis Results}}},
	author       = {Holter, Karoliine and Hennoste, Juhan Oskar and Lam, Patrick and Saan, Simmo and Vojdani, Vesal},
	year         = 2024,
	month        = oct,
	booktitle    = {Proceedings of the 2024 {{ACM SIGPLAN International Symposium}} on {{New Ideas}}, {{New Paradigms}}, and {{Reflections}} on {{ Programming}} and {{Software}}},
	publisher    = {Association for Computing Machinery},
	address      = {New York, NY, USA},
	series       = {Onward! '24},
	pages        = {130--146},
	doi          = {10.1145/3689492.3690053},
	isbn         = 9798400712159
}

@article{baldoniSurveySymbolicExecution2019,
  title = {A {{Survey}} of {{Symbolic Execution Techniques}}},
  author = {Baldoni, Roberto and Coppa, Emilio and D’elia, Daniele Cono and Demetrescu, Camil and Finocchi, Irene},
  date = {2019-05-31},
  journaltitle = {ACM Computing Surveys},
  shortjournal = {ACM Comput. Surv.},
  volume = {51},
  number = {3},
  pages = {1--39},
  issn = {0360-0300, 1557-7341},
  doi = {10.1145/3182657},
  url = {https://dl.acm.org/doi/10.1145/3182657},
  urldate = {2025-01-24},
  langid = {english}
}

@inproceedings{liEmpiricalStudyConcurrency2023a,
  title = {An {{Empirical Study}} on {{Concurrency Bugs}} in {{Interrupt-Driven Embedded Software}}},
  booktitle = {Proceedings of the 32nd {{ACM SIGSOFT International Symposium}} on {{Software Testing}} and {{Analysis}}},
  author = {Li, Chao and Chen, Rui and Wang, Boxiang and Wang, Zhixuan and Yu, Tingting and Jiang, Yunsong and Gu, Bin and Yang, Mengfei},
  date = {2023-07-12},
  pages = {1345--1356},
  publisher = {ACM},
  location = {Seattle WA USA},
  doi = {10.1145/3597926.3598140},
  url = {https://dl.acm.org/doi/10.1145/3597926.3598140},
  urldate = {2025-04-25},
  eventtitle = {{{ISSTA}} '23: 32nd {{ACM SIGSOFT International Symposium}} on {{Software Testing}} and {{Analysis}}},
  isbn = {979-8-4007-0221-1},
  langid = {english}
}

@article{mio,
author = {Lauwaerts, Tom and Steevens, Maarten and Scholliers, Christophe},
title = {{MIO: Multiverse Debugging in the Face of Input/Output}},
year = {2025},
issue_date = {October 2025},
publisher = {Association for Computing Machinery},
address = {New York, NY, USA},
volume = {9},
number = {OOPSLA2},
url = {https://doi.org/10.1145/3763136},
doi = {10.1145/3763136},
journal = {Proc. ACM Program. Lang.},
month = oct,
articleno = {358},
numpages = {30}
}

\newpage
\clearpage

\appendix
\crefalias{section}{appendix}
\crefalias{subsection}{appendix}

\section{Additional server rules}
\label{appendix:additional-rules}
In this appendix we list a series of additional rules for the debug server.

\subsection{Running rules}
\label{app:running-rules}
In this section we list the rules used by the server during normal execution in the $\textsc{Running}$ mode. These can be found in \cref{fig:runServerRules}.
\begin{figure*}[!h]
	\begin{mathpar}
		\inferrule[(\textsc{Run})]
       		{
				\textsf{non-prim} \; K \\
				K = \{ \rho,\updelta,st,\mu,e:e^* \} \\
				id(e) \notin bps \\
				K \hookrightarrow_{i} K'\\
            }
            {
				\langle \textsc{Running}, \varnothing, \varnothing, bps, c_{instr}, K \rangle \hookrightarrow_{d,i} \langle \textsc{Running}, \varnothing, \varnothing, bps, c_{instr} + 1, K' \rangle
			}\\
	\inferrule[(\textsc{Run-Prim-Out})]
       		{
				K = \{ \rho,\updelta,v_a^*:v^*,\mu,e:e^* \} \\
				e = \textbf{call} \; j \\
				P^{Out}(j) = p \\
				\lfloor p(v_a^*) \rfloor \\
				id(e) \notin bps \\
            }
            {
				\langle \textsc{Running}, \varnothing, \varnothing, bps, c_{instr}, K \rangle
				\hookrightarrow_{d,i} 
				\langle \textsc{Running}, \varnothing, \varnothing, bps, c_{instr} + 1, \{ \rho,\updelta,v^*,\mu, e^* \} \rangle			
			}\\
		\inferrule[(\textsc{Run-Prim-In})]
       		{
				K = \{ \rho,\updelta,v_a^*:v^*,\mu,e:e^* \} \\
				e = \textbf{call} \; j \\
				P^{In}(j) = p \\
				K' = \{ \rho,\updelta,v:v^*,\mu, e^* \} \\
				v \in \lfloor p(v_a^*) \rfloor \\
				id(e) \notin bps \\
            }
            {
				\langle \textsc{Running}, \varnothing, \varnothing, bps, c_{instr}, K \rangle
				\hookrightarrow_{d,i} 
				\langle \textsc{Running}, \varnothing, \mathit{prim}(c_{instr}+1, v), bps, 0, K' \rangle			
			}\\
	\end{mathpar}
	\caption{Server rules used during execution without any breakpoints or incoming debug messages.}
	\label{fig:runServerRules}
\end{figure*}

The \textsc{Run} rule is used during normal operation when there are no primitives being used and there are no breakpoints.
At every step the VM will execute one instruction with the $K \hookrightarrow_{i} K'$ relation. 
In order for the server to efficiently communicate the state of the underlying VM to the client, we keep track of the number of instructions executed since the last synchronisation with the client,  therefore in the \textsc{Run} rule the $c_{instr}$ counter is incremented for each step taken.
We deliberately omit the details of how breakpoints are exactly represented in this semantics, i.e. we assume that there is a function $id$ which given an expression uniquely identifies that expression in the program.

When there is no breakpoint but a primitive is being called, two rules apply.  When the primitive is an output primitive, the 
\textsc{Run-Prim-Out} rules is used. When executing this primitive, the primitive will be executed outside of WebAssembly. Because these primitives perform an action and do not read any values from the environment we consider them deterministic.
If the primitive is an input primitive, the \textsc{Run-Prim-In} rule is used.
When executing an input primitive, the primitive will be executed outside of WebAssembly and the result will be returned to the VM. Because these primitives are executed outside of the VM and return a value for use inside the VM, they are considered the source of non-determinism. 

The server communicates to the client which value the nondeterministic operation returned with the $prim(c_{instr},v)$ message which captures the returned value $v$ and the number of instructions $c_{instr}$ executed since the last synchronisation with the server.

\subsection{Rules for handling breakpoints}
In this section we list a series of rules for adding and removing breakpoints. These can be found in \cref{fig:breakpoints}.

\begin{figure*}[h!]
	\begin{mathpar}
		\inferrule[(\textsc{Dbg-bp-Add})]
       		{
			}
            {
				\langle es, \mathit{break}^+ \; id, \varnothing, bps, c_{instr}, K \rangle
				\hookrightarrow_{d,i}
				\langle es, \varnothing, msg, bps \cup id, c_{instr}, K \rangle
			}\\
		\inferrule[(\textsc{Dbg-Bp-Rem})]
       		{
			}
            {
				\langle es, \mathit{break}^- \; id, \varnothing, bps, c_{instr}, K \rangle
				\hookrightarrow_{d,i}
				\langle es, \varnothing, msg, bps \backslash id, c_{instr}, K \rangle
			}\\
	\end{mathpar}
	\caption{Rules to add and remove breakpoints on the server.}
	\label{fig:breakpoints}
\end{figure*}

The rule \textsc{Dbg-Bp-Add} adds a breakpoint to the set of breakpoints $bps$ when the $break^+$ message is received. The rule \textsc{Dbg-Bp-Rem} removes a breakpoint from this set when the $break^-$ message is received. 

\subsection{Rules for pausing the server}
In this section we list two additional rules that are used to pause the server when it hits a breakpoint or to manually pause the server. These rules are listed in \cref{fig:pause-bp}.

The rule \textsc{Dbg-Break} says that if a breakpoint is present on the current instruction the server will go into the \textsc{Paused} state. 
Additionally the \textsc{Dbg-Pause} rule can be used to manually pause the execution. This rule says that when the $pause$ message is received from the client the server will switch from the \textsc{Running} to the \textsc{Paused} state.

\begin{figure*}[h!]
	\begin{mathpar}
		\inferrule[(\textsc{Dbg-Break})]
       		{
				K = \{ \rho,\updelta,st,\mu,e^* \} \\
				id(e^*) \in bp \\
            }
            {
				\langle \textsc{Running}, \varnothing, \varnothing, bps, c_{instr}, K \rangle
				\hookrightarrow_{d,i} 
				\langle \textsc{Paused}, \varnothing, \mathit{executed}(c_{instr}), bps, 0, K \rangle
			}\\
		\inferrule[(\textsc{Dbg-Pause})]
       		{
            }
            {
				\langle \textsc{Running}, \mathit{pause}, \varnothing, bps, c_{instr}, K \rangle
				\hookrightarrow_{d,i}
				\langle \textsc{Paused}, \varnothing, \mathit{executed}(c_{instr}), bps, 0, K \rangle
			}\\
	\end{mathpar}
	\caption{Semantic rules used to pause the debug server.}
	\label{fig:pause-bp}
\end{figure*}

\subsection{Additional rules for when paused}
The rule \textsc{Dbg-Play} shown in \cref{fig:additional-paused} is used to put the server back in the \textsc{Running} mode after a $play$ message is received. The \textsc{Dbg-Inspect} rule is used by the client to retrieve the current state $K$ from the server. This is useful when wanting to analyse the current state. Finally the \textsc{Dbg-Reset} rule is used to reset the execution of the VM. In this case the state $K$ is reset to $K_{start}$. The state $K_{start}$ is the start state in which the WebAssembly module is loaded.

\begin{figure*}[h!]
	\begin{mathpar}
		\inferrule[(\textsc{Dbg-Play})]
       		{ 
            }
            {
				\langle \textsc{Paused}, \mathit{play}, \varnothing, bp, c_{instr}, K \rangle \hookrightarrow_{d,i} \langle \textsc{Running}, \varnothing, \varnothing, bp, c_{instr}, K \rangle
			}\\
		\inferrule[(\textsc{Dbg-Inspect})]
       		{
				msg = snapshot(K)
            }
            {
				\langle \textsc{Paused}, \mathit{inspect}, \varnothing, bp, c_{instr}, K \rangle \hookrightarrow_{d,i} \langle \textsc{Paused}, \varnothing, msg, bp, c_{instr}, K \rangle
			}\\
		\inferrule[(\textsc{Dbg-Reset})]
       		{
            }
            {
				\langle \textsc{Paused}, \mathit{reset}, \varnothing, bp, c_{instr}, K \rangle \hookrightarrow_{d,i} \langle \textsc{Paused}, \varnothing, \varnothing, bp, 0, K_{start} \rangle
			}\\
	\end{mathpar}
	\caption{Additional server rules used when the debugger is paused.}
	\label{fig:additional-paused}
\end{figure*}

\subsection{Traverse rules}
\label{appendix:traversal-rules}
The $\Rightarrow_{Traverse}$ rules shown in \cref{fig:traverse-rules} are used by the client to follow and/or create new paths if they do not yet exist, they are shown in \cref{fig:traverse-rules}. The rules processes the input one edge at a time. If the edge exists, the \textsc{Traverse-Node-Existing} rules applies, otherwise the \textsc{Traverse-Node-New} rule is used to create a new edge.

\begin{figure}[!h]
	\begin{mathpar}
		\inferrule[(\textsc{Traverse-Node-Existing})]
       		{
				T_{curr} = \mathit{Node} \; c^* \\
				\exists (msg, T_{next}) \in c^*
            }
            {
				\langle msg : msg^*, T,T_{curr} \rangle
				\Rightarrow_{Traverse}
				\langle msg^*, T,T_{next} \rangle
			}\\
			\inferrule[(\textsc{Traverse-Node-New})]
       		{
				T_{curr} = \mathit{Node} \; c^* \\
				\nexists (msg, T_{child}) \in c^* \\
				T' = \mathit{attach}(T, T_{curr}, (msg, T_{next})) \\
				T_{next} =  \mathit{Node}~\varnothing \\
            }
            {
				\langle msg : msg^*, T,T_{curr} \rangle
				\Rightarrow_{Traverse}
				\langle msg^*, T',T_{next} \rangle
			}\\
	\end{mathpar}
	\caption{Rules needed to navigate and extend the tree.}
	\label{fig:traverse-rules}
\end{figure}

\subsection{Path rules}
\label{appendix:path-rules}
The $\Rightarrow_{Path}$ rules shown in \cref{fig:path-rules} are used by the client to determine the sequence of messages needed to traverse from one state to another. The path rule is defined recursively, with \textsc{Path-Equal} being the stop condition. The rule builds up the messages needed to traverse from one state to another step by step. The \textsc{Path-Not-Equal} rule finds the child $T_c$ of the node $T$ for which $T'$ the target node is a descendant. It then adds the edge label to the list of messages and then recursively adds the messages needed to complete the path starting from this child node $T_c$ to the target node.

\begin{figure}
	\begin{mathpar}
		\inferrule[(\textsc{Path-Equal})]
       		{
				id(T) = id(T')
            }
            {
				T, msg^* \Rightarrow_{Path} T', msg^*
			}\\
		\inferrule[(\textsc{Path-Not-Equal})]
       		{
				id(T) \neq id(T') \\
				T = \mathit{Node} \; c^* \\
				\exists (msg_c, T_c) \in c^*, T_c \vdash^* T' \\
				T_{c}, \varnothing \Rightarrow_{Path}^* T', msg^*_{tail}
            }
            {
				T, msg^* \Rightarrow_{Path} T', msg^{*} : msg_c : msg^*_{tail}
			}\\

	\end{mathpar}
	\caption{Semantics to determine the path to a destination}
	\label{fig:path-rules}
\end{figure}

\subsection{Concolic expansion rules}
\label{app:expansion-rules}
To make online concolic execution start from an existing state that is not concolic but concrete a transformation needs to be performed. To achieve this transformation, we introduce rules in \cref{fig:concolic-expansion-rules} for expanding the program's state into its symbolic counterpart. \textsc{S-Val} is used to take a value and convert it into a symbolic value. \textsc{S-List} converts lists such as the stack, memory, globals, and locals into symbolic representations. Finally, \textsc{S-Concrete} combines these rules to create symbolic representations for the locals, globals, stack, and memory. The combination of concrete and symbolic locals, globals, stack, memory, an empty symbolic environment, and a path condition initialized to $true$ forms a concolic state. On this concolic state, the concolic semantics ($\Rightarrow_{cs}$), described in \cref{sec:concolic-rules} can be applied allowing the debugger to generate a a tree containing the future execution paths of the program.

\begin{figure}[!h]
	\begin{mathpar}
        \inferrule[(\textsc{S-List})]
    	            { 
                x = v : v^* \\
                x' = s(v) : s(v^*) \\
            }
            {
            	s(x) = x'
            }
    
        \inferrule[(\textsc{S-Val})]
    	            {
	            }
            {
    						s(v) = v
            }
            
         \inferrule[(\textsc{S-Concrete})]
    	            { 
    						\hat{\rho} = s(\rho) \\
    						\hat{\updelta} = s(\updelta) \\
    						\hat{st} = s(st) \\
    						\hat{\mu} = s(\mu) \\
    						\varepsilon = \varnothing \\
    						\pi = true \\
							K = \{ \rho,\updelta,st,\mu,e^* \} \\
            }
            {
    						s(K) = \{ K,\hat{\rho},\hat{\updelta},\hat{st},\hat{\mu},\varepsilon,\pi \}
            }
	\end{mathpar}
	\caption{Rules for extending a concrete state into a concolic state.}
	\label{fig:concolic-expansion-rules}
\end{figure}

\section{Example Concolic execution}
\label{appendix:example-concolic-execution}
To illustrate the workings of concolic execution, we provide a small example. 
Consider the program shown in \cref{lst:example-concolic}, which reads one sensor value using \texttt{chip\_analog\_read}. 
Depending on this sensor value, an LED is turned on. Reading this sensor value generates a symbolic variable we call $x$. 
The first time the program is executed, the value of $x$ is chosen at random, for example $0$. 
The program is then executed with this value both concretely and symbolically. 
With $x = 0$, the condition $x < 5$ evaluates to $\texttt{true}$. 
Since the condition holds, the path condition to reach that state is $x < 5$. 
This remains the final path condition for this program because no other paths are explored in this initial execution.

Concolic execution aims to find all possible paths. 
To achieve this, the analysis employs an SMT solver to identify a value for $x$ that would result in a new execution path. 
Initially, only one path has been explored: $x < 5$. To discover additional paths, the algorithm instructs the SMT solver to solve the equation $\lnot(x < 5)$. 
A possible solution is $x = 5$.  
The analysis then re-executes the program with this new model. 
In this case, the condition evaluates to $\texttt{false}$, and the path condition becomes $\lnot(x < 5)$. 
After this execution, the analysis attempts to find yet another path by seeking values of $x$ that are neither less than $5$ nor greater than or equal to $5$. 
However, since no such value exists, the analysis concludes that all possible paths have been explored.
The resulting models $\{x = 0\}$ and $\{x = 5\}$ suffice to cover all branches of this program.

\begin{figure}[h!]
\begin{lstlisting}[language=C, style=CStyle,escapechar=']
void main() {
    if (chip_analog_read(SENSOR) < 5) {
        chip_digital_write(LED, HIGH);
    }
}
\end{lstlisting}
\caption{Example program to illustrate the operation of concolic execution.}
\label{lst:example-concolic}
\end{figure}

\section{Prototype implementation}\label{app:prototype}

We have implemented a prototype frontend for our remote concolic multiverse debugger in Kotlin.
\Cref{fig:example01} shows a screenshot of an active debugging session in the debugger.

\begin{figure*}[!h]
	\centering
	\includegraphics[width=\textwidth]{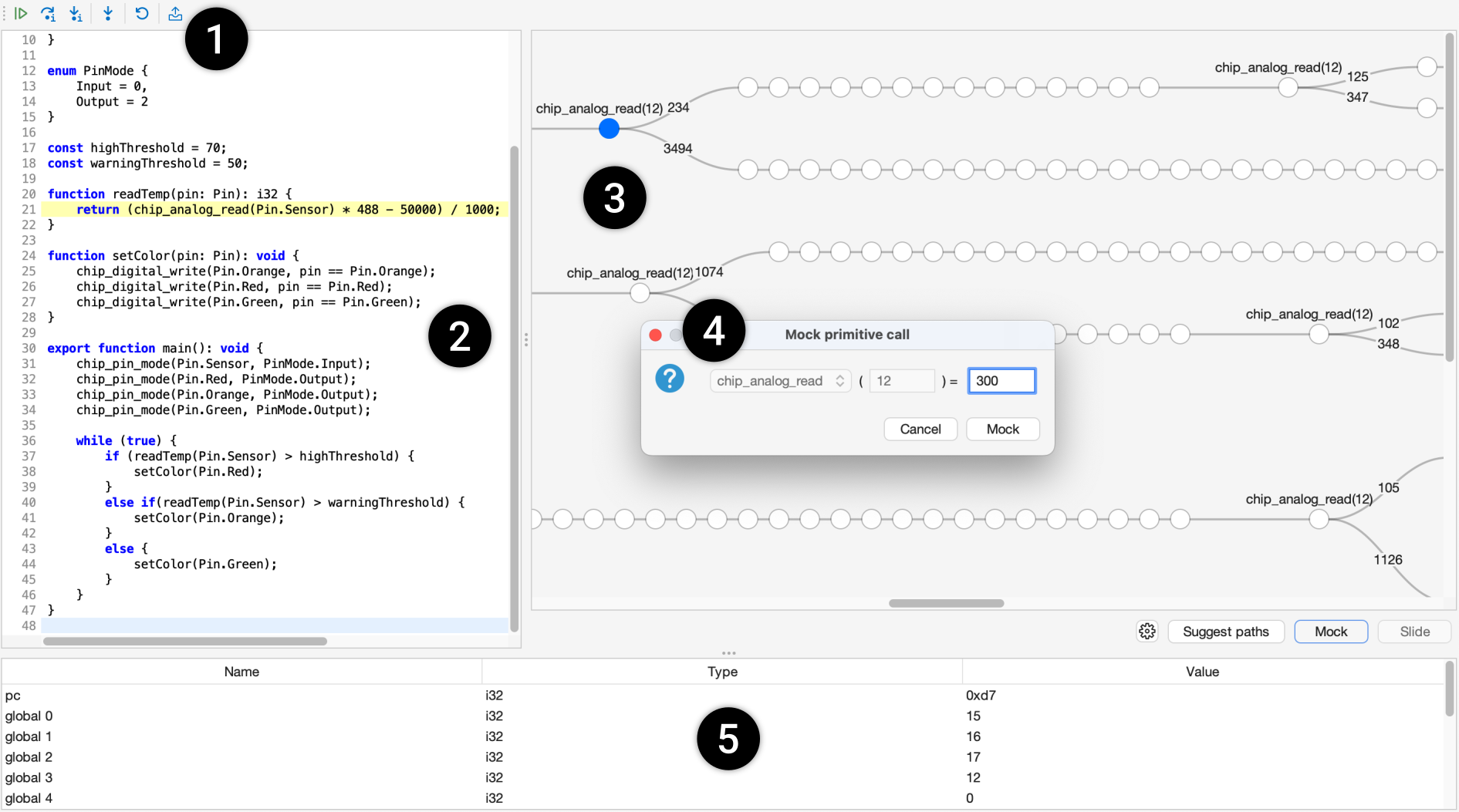}
	\caption{
	Screenshot of the proposed debugger debugging a small temperature sensor program.
        Top left (1): the debug operations; pause or continue, step over (instruction), step into (instruction), step to the next line, and update program.
        Left pane (2): source code.
        Top-right pane (3): the multiverse tree. 
        Popup window (4): window to mock a primitive to return an arbitrary value.
        Bottom pane (5): general debugging information, such as the global and local variables, and the current program counter.}
        \label{fig:example01}
\end{figure*}

\subsection{The available Debug Operations}
Our example debug scenario illustrates all the debug operations and features of our remote concolic multiverse debugger.
For clarity we summarize the operations here.

\subparagraph*{Suggest paths} At any point in the program developers can ask the debugger to perform concolic analysis and suggest future paths starting from the current node in the multiverse tree. The depth of these paths, can be configured by the user as well. The paths are added automatically to the tree visualisation.

\subparagraph*{Mock} Through the mock button, developers can register mock values for primitive calls. Whenever a call occurs with the right arguments, the value in the mock table will be used rather than any \emph{realtime} input value.

\subparagraph*{Step operations} The debugger has three different step instructions: the typical step over and single step (or step into) operations, and a \emph{step to next line} operation. In our prototype, the steps operate at the level of WebAssembly instructions, but the third operation allows for stepping to the next line of AssemblyScript code.
The step instructions will update the current line in de source code view, and the current node in the tree visualisation; adding any new nodes where necessary.

\subparagraph*{Slide} Aside from stepping forwards linearily with the step instructions, developers can select any node in the multiverse tree, including those generated by the concolic analysis, and slide to this node.
When sliding to a path that branched previous to the current node, the debugger will reset the program and deterministically replay the execution using the input mocking function.

\subparagraph*{Restart} Additionally, it is also possible to manually reset the execution to its  starting point without using slide with the restart button. This will keep the existing multiverse tree intact but jump to the root node, allowing other branches to be explored.

\subparagraph*{Pause \& play} Similar to conventional debugger, the debugger can pause and start execution at any time.

\subparagraph*{Upload program} Upload the program currently opened in the source code view, to the debugger backend.

\vfill
\newpage
\section{Debugger Guarantees}\label{app:guarantee} 

Debugger soundness and completeness is defined in terms of the relationship between the debugging semantics and the underlying language semantics.
We denote the start of the WebAssembly program as $K_{start}$ and the start of the debugging session as $dbg_{start}$.
The start states for the client and server are defined as follows.
$$
dbg_{start} = C_{start} \; | \; S_{start} = \langle \varnothing, \varnothing, \mathit{Node} \; \varnothing,\mathit{Node} \; \varnothing \rangle \; | \; \langle \textsc{Paused}, \varnothing, \varnothing, \varnothing, 0, K_{start}\rangle
$$

We now provide the full proofs for debugger soundness and completeness of the semantics presented in the main article.

\begin{theorem}[Debugger soundness]\label{theorem:debugger-soundness}
For every debugger state $dbg'$ that is reachable by performing debugging operations $\rightarrow$ starting from the start state of the client $dbg_{start}$, the program state $K'$ contained in $dbg'$ must be reachable through the underlying language semantics $\hookrightarrow_i$ by stepping from the start program state $K_{start}$.
$$
    \forall \; dbg', where \; K' \in  dbg', \;
    dbg_{start} \rightarrow^* dbg' \Rightarrow K_{start} \hookrightarrow_i^* K'
$$
\end{theorem}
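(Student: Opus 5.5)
The proof goes by induction on the length $n$ of the derivation $dbg_{start} \rightarrow^n dbg'$. We establish the invariant that the program state $K$ stored in the server component of every reachable debugger state satisfies $K_{start} \hookrightarrow_i^* K$. Only the server configuration carries a program state $K$. The $\mathit{snapshot}(K)$ message is the one exception: a copy of $K$ also travels to the client. The concolic analysis started by \textsc{Concolic-Receive} only produces a tree and never writes back into the server's $K$. So it suffices to track the server's $K$, and to note that any snapshot message carries a copy of a server state that already satisfies the invariant. The base case is immediate, since $S_{start}$ contains $K_{start}$ and $\hookrightarrow_i^*$ is reflexive.

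For the inductive step, we case-split on the global rule of \cref{fig:serverClientCommunication} used in the last transition $dbg \rightarrow dbg'$. Under \textsc{Server-To-Client} the server's $K$ is unchanged, because client rules $\leadsto$ only modify $T$, $T_{curr}$ and message queues. Under \textsc{Client-To-Server} and \textsc{Server-Step} the server takes one $\hookrightarrow_{d,i}$ step, and we sub-split on the server rule used. There are three groups.

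\emph{Rules leaving $K$ untouched:} \textsc{Dbg-Bp-Add}, \textsc{Dbg-Bp-Rem}, \textsc{Dbg-Break}, \textsc{Dbg-Pause}, \textsc{Dbg-Play} and \textsc{Dbg-Inspect}. Here the invariant holds trivially.

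\emph{Rules advancing $K$:} \textsc{Run}, \textsc{Dbg-Step}, \textsc{Run-Prim-Out}, \textsc{Dbg-Step-Prim-Out}, \textsc{Run-Prim-In}, \textsc{Dbg-Step-Prim-In} and \textsc{Dbg-Mock}. For each we show $K \hookrightarrow_i K'$. For \textsc{Run} and \textsc{Dbg-Step} this is a premise. For the primitive rules we match the resulting state against \textsc{Output-Prim} and \textsc{Input-Prim} in \cref{fig:primitive-execution}; the premises $\lfloor p(v_a^*)\rfloor$ and $v \in \lfloor p(v_a^*) \rfloor$ coincide with those of the language rules. We then compose with the induction hypothesis to get $K_{start} \hookrightarrow_i^* K'$.

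\emph{\textsc{Dbg-Reset}:} it sets $K$ to $K_{start}$, which is reachable in zero steps.

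The main obstacle is \textsc{Dbg-Mock}. Its premise is $v \in codom(p)$, whereas \textsc{Input-Prim} requires $v \in \lfloor p(v_a^*) \rfloor$, i.e.\ a value the primitive may return for the concrete arguments $v_a^*$. To close this gap I would use the paper's stated requirement that $v$ is a value the primitive can produce in the underlying semantics. The plan is to fix the interpretation of $\lfloor p(v_a^*)\rfloor$ as the set of all values $p$ may nondeterministically produce. Treating inputs as arbitrary environment values, this set coincides with $codom(p)$, so the \textsc{Input-Prim} premise is met. If that identification is too coarse for argument-dependent primitives, the fallback is to strengthen the premise of \textsc{Dbg-Mock} to $v \in \lfloor p(v_a^*)\rfloor$, which is exactly what soundness needs.

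A minor secondary check is that the global rules' side conditions ($\approx$, $noMessages$) never alter $K$, so they cannot break the invariant.
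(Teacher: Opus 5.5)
Your proof follows the paper's own argument: induction on the length of $\rightarrow^*$, a case split over the three global rules, and a sub-split over the server rules into those that leave $K$ unchanged, those that mirror $\hookrightarrow_i$ or \textsc{Input-Prim}/\textsc{Output-Prim}, and \textsc{Dbg-Reset}. Your \textsc{Dbg-Mock} case spells out the identification of $codom(p)$ with $\lfloor p(v_a^*)\rfloor$, which the paper only asserts (``mock is restricted to only mocking values that could actually be returned''). You also explicitly cover \textsc{Dbg-Break} and the $\mathit{snapshot}$ copy, both of which the paper passes over.
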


\begin{proof}
This theorem holds through induction over the debug steps $\rightarrow^*$.

\noindent
\textbf{Base case} In the base case there are zero steps, and the theorem holds trivially because $K_{start}$ equals $K'$.

\noindent
\textbf{Induction step} In the induction step we know that $dbg_{start} \rightarrow^* dbg \Rightarrow K_{start} \hookrightarrow_i^* K \land K \in dbg$, and we now want to prove that the theorem still holds if we take the next step $dbg \rightarrow dbg' \Rightarrow K \hookrightarrow_i^* K' \land K' \in dbg'$. The next $dbg \rightarrow dbg'$ step can only be one of three cases shown in \cref{fig:serverClientCommunication}.
\begin{description}
\item[Case \textsc{Server-To-Client}] In case of the \textsc{Server-To-Client} rule, the client receives a message from the server and processes it. In this case all the rules leave the program state $K$ unchanged and so the theorem holds by the induction hypothesis.

\item[Case \textsc{Client-To-Server}] In the case of the \textsc{Client-To-Server} rule the server takes exactly one step from $S$ to $S'$, where the program state $K$ can change. We will discuss each possible server-side rule that can move $S$ to $S'$ in turn. The rules are shown in figures \ref{fig:server:paused}, \ref{fig:breakpoints}, \ref{fig:pause-bp}, and \ref{fig:additional-paused}.
\begin{description}
  \item[Subcase \textsc{Dbg-Step}] The new state $K'$ is the result of a single step $K \hookrightarrow_i K'$, and thereby the theorem holds since $K_{start} \hookrightarrow_i^* K$ holds by the induction hypothesis, combined with $K \hookrightarrow_i K'$, $K_{start} \hookrightarrow_i^* K'$ also holds.
 
  \item[Subcase \textsc{Dbg-Step-Prim-Out}, \textsc{Dbg-Step-Prim-In}] The debugging rules have the exact same preconditions as the \textsc{output-prim} and \textsc{input-prim} rules\footnote{These rules are shown in \cref{fig:primitive-execution}.} respectively.
  Given the same preconditions the debugging rules move $K$ to $K'$ in exactly the same way as their respective rules in the WebAssembly rules. Hence this case holds.

\item[Subcase \textsc{Dbg-Mock}] This rule only applies when $K$ can take a step with the rule \textsc{input-prim}. Instead of applying this rule, mocking replaces the call to primitive $p$ by a user-supplied value $v$.
  However, this value must be part of the codomain of the primitive. Since $mock$ is restricted to only mocking values that could actually be returned by an input primitive in the real execution, all states explorable in the debugger also exist in the underlying language semantics.
  In other words, the resulting state $\{ \rho,\updelta, v : v^*,\mu, e^* \}$ can also be reached by the nondeterministic rule \textsc{input-prim}, and hence the theorem holds in this case.

\item[Subcase \textsc{Dbg-bp-Add}, \textsc{Dbg-bp-Rem}, \textsc{Dbg-Pause}, \textsc{Dbg-Play}, \textsc{Dbg-Inspect}] The state $K$ is not changed and so the theorem holds by the induction hypothesis.

\item[Subcase \textsc{Dbg-Reset}] The execution restarts from the beginning, changing $K$ to $K_{start}$. Since $K_{start}$ can go in zero steps to itself, the theorem holds.

\end{description}

\item[Case \textsc{Server-Step}] In the last case the \textsc{Server-Step} rule, only the rules \textsc{Run}, \textsc{Run-Prim-Out}, \textsc{Run-Prim-In}, and \textsc{Dbg-Break} are possible since all message queues need to be empty. These rules are shown in  \ref{fig:runServerRules} and \ref{fig:pause-bp}.
\begin{description}
	\item[Case \textsc{Run}] For this rule, the new state K' is the result of a single step $K \hookrightarrow_i K'$. This is equivalent to the \textsc{Dbg-Step} case for \textsc{Server-To-Client}.
	\item[Case \textsc{Run-Prim-Out} and \textsc{Run-Prim-In}] These cases hold for the same reason as the \textsc{Dbg-Step-Prim-Out} and \textsc{Dbg-Step-Prim-In} rules for \textsc{Server-To-Client}.
\end{description}

\end{description}

\end{proof}

\begin{theorem}[Debugger completeness]\label{theorem:debugger-completeness} For every program state $K'$ reachable through the underlying language semantics $\hookrightarrow_i$ starting from start state $K_{start}$, there must be a corresponding debugging session where $dbg'$ is reachable starting from $dbg_{start}$ through $\rightarrow$ where the program state $K'$ is contained in $dbg'$.
$$
    \forall \; K', where \; K' \in  dbg', \;
    K_{start} \hookrightarrow_i^* K' \Rightarrow dbg_{start} \rightarrow^* dbg'
$$
\begin{proof}
This can be proven by induction over the steps $\hookrightarrow^*_{i}$.
  
  \noindent
  \textbf{Base case}
  In the base case there are zero steps, and the theorem holds trivially since $dbg_{start}$ contains $K_{start}$ by definition.
  
  \noindent
  \textbf{Induction step}
  In the induction step we know that that $K_{start} \hookrightarrow_i^* K \Rightarrow dbg_{start} \rightarrow^* dbg \land K \in dbg$. We now prove that the theorem holds for the next step $K \hookrightarrow_i K' \Rightarrow dbg \rightarrow^* dbg' \land K' \in dbg'$, there are only two cases to consider.
  \begin{description}
    \item[Case deterministic instructions] These instructions, which can be either normal WebAssembly instructions or calls to output primitives, are executed using $K \hookrightarrow_{i} K'$ in the underlying language semantics. The debugger semantics can use the $step$ message to move $K$ exactly to $K'$ with the \textsc{Dbg-Step} rule since this rule uses $\hookrightarrow_{i}$ from the underlying language semantics. Hence the theorem holds in this case.
    \item[Case nondeterministic instructions] For these instructions which are calls to input primitives, where $K = \{ \rho,\updelta,v_a^* : st,\mu,call \; j \}$ and $P^{In}(j) = p$ holds, \textsc{Input-Prim} applies in the underlying language semantics. We know which nondeterministic value $v$ was returned since we know $K'$. We can use the $\mathit{mock} \; v$ message in this case, so the \textsc{Dbg-Mock} step rule in the debugger semantics changes $K$ exactly into $K'$. As a result, the theorem also holds in this case.
  \end{description} 
\end{proof}
\end{theorem}

It should be noted that while the concolic execution used to expand the multiverse tree performs an under-approximation, it does not break completeness since any missed execution paths can still be reached through \textit{mock} messages.

\section{Example programs}

In this appendix we list the sources for each of the programs used in our evaluation. Additionally, we list the source code for two examples for which the code is not publicly available because it was written by ourselves or because it is a modification from an existing program.

\subsection{Sources}
\label{app:sources}
In \cref{tbl:program-urls} below, we list the sources for each of the programs we used in our evaluation. The source code for arduino-while-no-callibrate and the breakout game is provided below.

\begin{table}[h!]
\begin{tabular}{l | p{10cm}}
	Program & Source \\
	\hline
	arduino-knock & \url{https://github.com/arduino/arduino-examples/blob/main/examples/06.Sensors/Knock/Knock.ino} \\
	arduino-touch-sensor-lamp & \url{https://github.com/arduino/arduino-examples/blob/main/examples/10.StarterKit_BasicKit/p13_TouchSensorLamp/p13_TouchSensorLamp.ino} \\
    arduino-keyboard & \url{https://github.com/arduino/arduino-examples/blob/main/examples/10.StarterKit_BasicKit/p07_Keyboard/p07_Keyboard.ino} \\
    arduino-switch & \url{https://github.com/arduino/arduino-examples/blob/main/examples/05.Control/switchCase/switchCase.ino} \\
    arduino-love-o-meter & \url{https://github.com/arduino/arduino-examples/blob/main/examples/10.StarterKit_BasicKit/p03_LoveOMeter/p03_LoveOMeter.ino} \\
    arduino-crystal-ball &\url{https://github.com/arduino/arduino-examples/blob/main/examples/10.StarterKit_BasicKit/p11_CrystalBall/p11_CrystalBall.ino} \\
    arduino-knock-lock & \url{https://github.com/arduino/arduino-examples/blob/main/examples/10.StarterKit_BasicKit/p12_KnockLock/p12_KnockLock.ino} \\
    arduino-zoetrope & \url{https://github.com/arduino/arduino-examples/blob/main/examples/10.StarterKit_BasicKit/p10_Zoetrope/p10_Zoetrope.ino} \\
    arduino-while & \url{https://github.com/arduino/arduino-examples/blob/main/examples/05.Control/WhileStatementConditional/WhileStatementConditional.ino} \\
	gesture-robot & \url{https://www.electronicsforu.com/electronics-projects/build-clap-gesture-controlled-robot} \\
\end{tabular}
\caption{Table listing the url for each of the programs used}
\label{tbl:program-urls}
\end{table}

\subsection{arduino-while-no-calibrate}
\label{app:while-no-calibrate}
In this section we list the source code for the arduino-while-no-calibrate example used in \cref{sec:evaluation}. This is example is a modification from the original arduino-while (\href{https://github.com/arduino/arduino-examples/blob/main/examples/05.Control/WhileStatementConditional/WhileStatementConditional.ino}{Source}) example program with the calibration step removed. Just like the other arduino examples the source code was also translated into AssemblyScript for use on the WARDuino VM.
\begin{lstlisting}[language=AssemblyScript, style=CStyle,breaklines=true]
@external("env", "chip_delay") declare function chip_delay(value: i32): void;
@external("env", "chip_pin_mode") declare function chip_pin_mode(pin: i32, mode: i32): void;
@external("env", "chip_digital_write") declare function chip_digital_write(pin: i32, value: i32): void;
@external("env", "chip_digital_read") declare function chip_digital_read(pin: i32): i32;
@external("env", "chip_analog_read") declare function chip_analog_read(pin: i32): i32;
@external("env", "chip_analog_write") declare function chip_analog_write(pin: i32, value: i32): void;

const INPUT = 0;
const OUTPUT = 2;

enum Pin {
    sensorPin = 0,
    ledPin = 9,
    indicatorLedPin = 13,
    buttonPin = 2,
}

let sensorMin: i32 = 1023;  // minimum sensor value
let sensorMax: i32 = 2048;     // maximum sensor value
let sensorValue: i32 = 0;   // the sensor value

export function main(): void {
    chip_pin_mode(Pin.indicatorLedPin, OUTPUT);
    chip_pin_mode(Pin.ledPin, OUTPUT);
    chip_pin_mode(Pin.buttonPin, INPUT);

    while (true) {
        // signal the end of the calibration period
        chip_digital_write(Pin.indicatorLedPin, 0);

        // read the sensor:
        sensorValue = chip_analog_read(Pin.sensorPin);

        // apply the calibration to the sensor reading
        sensorValue = (sensorValue - sensorMin) * 255 / (sensorMax - sensorMin);

        // in case the sensor value is outside the range seen during calibration
        if (sensorValue < 0) sensorValue = 0;
        if (sensorValue > 255) sensorValue = 255;

        // fade the LED using the calibrated value:
        chip_analog_write(Pin.ledPin, sensorValue);
    }
}
\end{lstlisting}

\subsection{Breakout game}
\label{app:breakout}
In this section we list the source code for the breakout game used in \cref{sec:evaluation} and shown in \cref{fig:breakout-game}.

\begin{figure}
	\centering
	\includegraphics[width=0.5\linewidth]{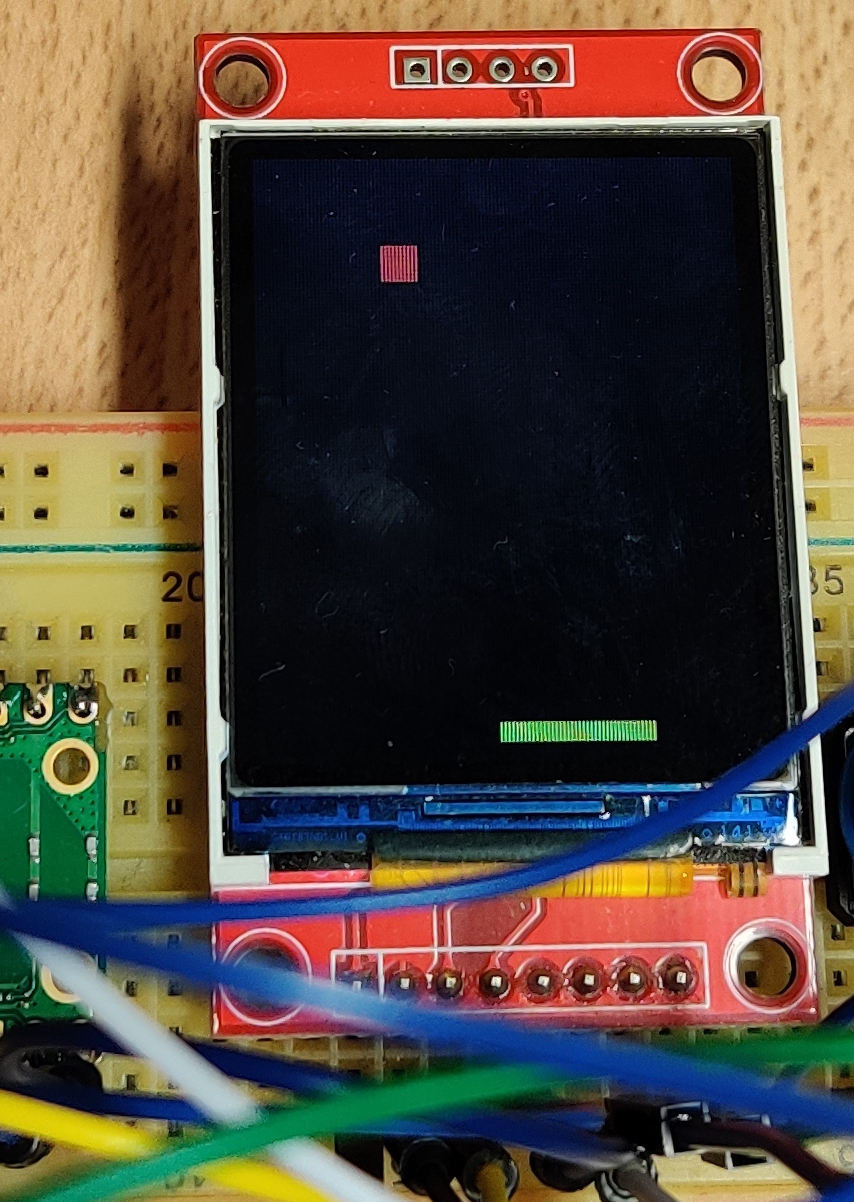}
	\caption{The breakout game written in AssemblyScript running on the modified WARDuino VM on a Raspberry Pi Pico microcontroller with an ST7735 display.}
	\label{fig:breakout-game}
\end{figure}

\begin{lstlisting}[language=AssemblyScript, style=CStyle,breaklines=true]
// ST7735 Constants
const ST7735_NOP = 0x0
const ST7735_SWRESET = 0x01
const ST7735_RDDID = 0x04
const ST7735_RDDST = 0x09
const ST7735_SLPIN = 0x10
const ST7735_SLPOUT = 0x11
const ST7735_PTLON = 0x12
const ST7735_NORON = 0x13
const ST7735_INVOFF = 0x20
const ST7735_INVON = 0x21
const ST7735_DISPOFF = 0x28
const ST7735_DISPON = 0x29
const ST7735_CASET = 0x2A
const ST7735_RASET = 0x2B
const ST7735_RAMWR = 0x2C
const ST7735_RAMRD = 0x2E
const ST7735_COLMOD = 0x3A
const ST7735_MADCTL = 0x36
const ST7735_FRMCTR1 = 0xB1
const ST7735_FRMCTR2 = 0xB2
const ST7735_FRMCTR3 = 0xB3
const ST7735_INVCTR = 0xB4
const ST7735_DISSET5 = 0xB6
const ST7735_PWCTR1 = 0xC0
const ST7735_PWCTR2 = 0xC1
const ST7735_PWCTR3 = 0xC2
const ST7735_PWCTR4 = 0xC3
const ST7735_PWCTR5 = 0xC4
const ST7735_VMCTR1 = 0xC5
const ST7735_RDID1 = 0xDA
const ST7735_RDID2 = 0xDB
const ST7735_RDID3 = 0xDC
const ST7735_RDID4 = 0xDD
const ST7735_PWCTR6 = 0xFC
const ST7735_GMCTRP1 = 0xE0
const ST7735_GMCTRN1 = 0xE1
// Constants
const JOYSTICK = 2
// PIN configurations
const CS = 13
const RESET = 14
const DC = 15
const SDA = 7
const SCK = 6
// Arduino constants
const OUTPUT = 1
const INPUT = 0
const LOW = 0
const HIGH = 1
// Type declarations
@external("env", "write_spi_byte") 
declare function write_spi_byte(byte: i32): void;

@external("env", "write_spi_bytes_16") 
declare function write_spi_bytes_16(times: i32, color: i32): void;

@external("env", "chip_digital_write") 
declare function chip_digital_write(pin: i32, value: i32): void;

@external("env", "chip_digital_read") 
declare function chip_digital_read(pin: i32): i32;

@external("env", "chip_analog_read") 
declare function chip_analog_read(pin: i32): i32;

@external("env", "chip_pin_mode") 
declare function chip_pin_mode(pin: i32, value: i32): void;

@external("env", "chip_delay_us") 
declare function chip_delay_us(value: i32): void;

// Writing LCD pins
function LCD_SCK(b: i32): void { chip_digital_write(SCK, b); }

function LCD_SDO(b: i32): void { chip_digital_write(SDA, b); }

function LCD_RS(b: i32): void { chip_digital_write(DC, b); }

function LCD_CS(b: i32): void { chip_digital_write(CS, b); }

function LCD_RESET(b: i32): void { chip_digital_write(RESET, b); }

function writecommand(c: i32): void {
    LCD_RS(LOW);
    LCD_CS(LOW);
    write_spi_byte(c);
    LCD_CS(HIGH);
}

function writedata(c: i32): void {
    LCD_RS(HIGH);
    LCD_CS(LOW);
    write_spi_byte(c);
    LCD_CS(HIGH);
}

function setAddrWindow(x0: i32, y0: i32, x1: i32, y1: i32): void {
    writecommand(ST7735_CASET);  // column addr set
    writedata(0x00);
    writedata(x0 + 0);  // XSTART
    writedata(0x00);
    writedata(x1 + 0);  // XEND

    writecommand(ST7735_RASET);  // row addr set
    writedata(0x00);
    writedata(y0 + 0);  // YSTART
    writedata(0x00);
    writedata(y1 + 0);  // YEND

    writecommand(ST7735_RAMWR);  // write to RAM
}

const SCREEN_WIDTH = 128
const SCREEN_HEIGHT = 160

function chip_fill_rect(x: i32, y: i32, w: i32, h:i32, color: i32): void {
    setAddrWindow(x, y, x + w - 1, y + h - 1);
    // setup for data
    LCD_RS(HIGH);
    LCD_CS(LOW);
    write_spi_bytes_16(w * h, color);
    LCD_CS(HIGH);
}

function chip_fill_screen(color: i32): void {
    setAddrWindow(0, 0, SCREEN_WIDTH - 1, SCREEN_HEIGHT - 1);
    // setup for data
    LCD_RS(HIGH);
    LCD_CS(LOW);
    write_spi_bytes_16(SCREEN_WIDTH * SCREEN_HEIGHT, color);
    /*unsigned char colorB = color >> 8;
    for (int x=0; x < SCREEN_WIDTH*SCREEN_HEIGHT ; x++) {
            write_spi_byte(colorB);
            write_spi_byte(color);
    }*/
    LCD_CS(HIGH);
}

function ST7735_initR(): void {
    LCD_RESET(HIGH);
    chip_delay_us(500);
    LCD_RESET(LOW);
    chip_delay_us(500);
    LCD_RESET(HIGH);
    chip_delay_us(500);
    LCD_CS(LOW);

    writecommand(ST7735_SWRESET);  // software reset
    chip_delay_us(150);

    writecommand(ST7735_SLPOUT);  // out of sleep mode
    chip_delay_us(500);

    writecommand(ST7735_COLMOD);  // set color mode
    writedata(0x05);              // 16-bit color
    chip_delay_us(10);

    writecommand(ST7735_FRMCTR1);  // frame rate control - normal mode
    writedata(0x01);  // frame rate = fosc / (1 x 2 + 40) * (LINE + 2C + 2D)
    writedata(0x2C);
    writedata(0x2D);

    writecommand(ST7735_FRMCTR2);  // frame rate control - idle mode
    writedata(0x01);  // frame rate = fosc / (1 x 2 + 40) * (LINE + 2C + 2D)
    writedata(0x2C);
    writedata(0x2D);

    writecommand(ST7735_FRMCTR3);  // frame rate control - partial mode
    writedata(0x01);               // dot inversion mode
    writedata(0x2C);
    writedata(0x2D);
    writedata(0x01);  // line inversion mode
    writedata(0x2C);
    writedata(0x2D);

    writecommand(ST7735_INVCTR);  // display inversion control
    writedata(0x07);              // no inversion

    writecommand(ST7735_PWCTR1);  // power control
    writedata(0xA2);
    writedata(0x02);  // -4.6V
    writedata(0x84);  // AUTO mode

    writecommand(ST7735_PWCTR2);  // power control
    writedata(0xC5);              // VGH25 = 2.4C VGSEL = -10 VGH = 3 * AVDD

    writecommand(ST7735_PWCTR3);  // power control
    writedata(0x0A);              // Opamp current small
    writedata(0x00);              // Boost frequency

    writecommand(ST7735_PWCTR4);  // power control
    writedata(0x8A);              // BCLK/2, Opamp current small & Medium low
    writedata(0x2A);

    writecommand(ST7735_PWCTR5);  // power control
    writedata(0x8A);
    writedata(0xEE);

    writecommand(ST7735_VMCTR1);  // power control
    writedata(0x0E);

    writecommand(ST7735_INVOFF);  // don't invert display

    writecommand(ST7735_MADCTL);  // memory access control (directions)

    // http://www.adafruit.com/forums/viewtopic.php?f=47&p=180341

    // R and B byte are swapped
    // madctl = 0xC8;

    // normal R G B order
    // madctl = 0xC0;
    writedata(0xC0);  // row address/col address, bottom to top refresh

    writecommand(ST7735_COLMOD);  // set color mode
    writedata(0x05);              // 16-bit color

    writecommand(ST7735_CASET);  // column addr set
    writedata(0x00);
    writedata(0x00);  // XSTART = 0
    writedata(0x00);
    writedata(0x7F);  // XEND = 127

    writecommand(ST7735_RASET);  // row addr set
    writedata(0x00);
    writedata(0x00);  // XSTART = 0
    writedata(0x00);
    writedata(0x9F);  // XEND = 159
    writecommand(ST7735_GMCTRP1);
    writedata(0x0f);
    writedata(0x1a);
    writedata(0x0f);
    writedata(0x18);
    writedata(0x2f);
    writedata(0x28);
    writedata(0x20);
    writedata(0x22);
    writedata(0x1f);
    writedata(0x1b);
    writedata(0x23);
    writedata(0x37);
    writedata(0x00);
    writedata(0x07);
    writedata(0x02);
    writedata(0x10);
    writecommand(ST7735_GMCTRN1);
    writedata(0x0f);
    writedata(0x1b);
    writedata(0x0f);
    writedata(0x17);
    writedata(0x33);
    writedata(0x2c);
    writedata(0x29);
    writedata(0x2e);
    writedata(0x30);
    writedata(0x30);
    writedata(0x39);
    writedata(0x3f);
    writedata(0x00);
    writedata(0x07);
    writedata(0x03);
    writedata(0x10);
    writecommand(ST7735_DISPON);
    chip_delay_us(100);
    writecommand(ST7735_NORON);  // normal display on
    chip_delay_us(10);
    LCD_CS(HIGH);
}

@unmanaged
class Coord {
    x: i32;
    y: i32;
}

function max(a: i32, b: i32): i32 { return a > b ? a : b; }

function min(a: i32, b: i32): i32 { return a < b ? a : b; }
function diff(a: i32, b: i32): i32 { return a < b ? b-a : a-b; }

const DARK = 0X3DF7
const RED = 0x7C00
const GREEN = 0x03F0
const BLUE = 0x000F
const WHITE = 0xFFFF
const BLACK = 0x0000


const PURPLE = (RED | BLUE)
const CYAN = (BLUE | GREEN)
const YELLOW = (GREEN | RED)

// Ball Abstraction.
const BSIZE = 10
const PSIZE = 40
const PYPOS = 150
const BCOL = RED
const BCOL2 = CYAN
const BGCOL = (GREEN & DARK)
const PCOL = YELLOW

@unmanaged
class Ball {
    pos: Coord;
    speed: Coord;
}

export function main(): void {
    // ball state
    let ball: Ball = {
        pos: {
            x: 1000,
            y: 1000,
        },
        speed: {
            x: 50,
            y: 50
        }
    };
    // end init ball
    // end ball state
    // pin mode
    chip_pin_mode(CS, OUTPUT);
    chip_pin_mode(DC, OUTPUT);
    chip_pin_mode(SDA, OUTPUT);
    chip_pin_mode(SCK, OUTPUT);
    chip_pin_mode(RESET, OUTPUT);
    chip_pin_mode(JOYSTICK, INPUT);

    ST7735_initR();
    chip_fill_screen(BGCOL);
    let x = 64 - 20; // padle pos


    while (1) {
        // update ball
        let nextX = ball.pos.x + ball.speed.x;
        if(ball.speed.x > 0){
            chip_fill_rect(
               ball.pos.x/ 100, ball.pos.y / 100,
                diff(ball.pos.x/100,nextX/100), BSIZE,
                BGCOL);
        } else {
            chip_fill_rect(
                nextX/100 + BSIZE, ball.pos.y / 100,
                diff(ball.pos.x/100,nextX/100), BSIZE,
                BGCOL);
        }
        let nextY = ball.pos.y + ball.speed.y;
        if(ball.speed.y > 0){
            chip_fill_rect(
                ball.pos.x/ 100, ball.pos.y / 100,
                BSIZE,diff(ball.pos.y/100,nextY/100),
                BGCOL);
        } else {
            chip_fill_rect(
                ball.pos.x / 100, nextY/100 + BSIZE, //
                BSIZE,diff(ball.pos.y/100,nextY/100),
                BGCOL); //

        }
        ball.pos.x += ball.speed.x;
        ball.pos.y += ball.speed.y;
        chip_fill_rect(ball.pos.x / 100, ball.pos.y / 100, BSIZE, BSIZE,
                       ball.pos.y/100 + BSIZE >= PYPOS ? BCOL2 : BCOL);
        // bounce of the wall
        if (ball.pos.x + BSIZE * 100 > 12800) {
            ball.speed.x *= -1;
        }
        if (ball.pos.y + BSIZE * 100 > 16400) {
            ball.speed.y *= -1;
        }
        if (ball.pos.y < 0) {
            ball.speed.y *= -1;
        }
        if (ball.pos.x < 0) {
            ball.speed.x *= -1;
        }
        // end update ball
        // update paddle

        if (ball.pos.y/100 + BSIZE >= PYPOS) {
            if (ball.pos.x/100 + BSIZE > x && ball.pos.x/100 < x + PSIZE && ball.speed.y > 0) {
                ball.speed.y *= -1;
                ball.speed.y += ball.speed.y < 0 ? -1 : 1;
                ball.speed.x += ball.speed.x < 0 ? -1 : 1;

            }
        }

        chip_fill_rect(x, PYPOS, PSIZE, 5, BGCOL);
        x = chip_analog_read(JOYSTICK);
        chip_fill_rect(x, PYPOS, PSIZE, 5, PCOL);

        chip_delay_us(50);
    }
}

\end{lstlisting}

\section{Performance comparison}
\label{app:performance}
In this appendix we compare the performance overhead of our trace-based approach with MIO's snapshot-based approach on the breakout game. This game, due to its frequent usage of output primitives to control the display is very slow under MIO~\cite{mio}. This is caused by MIO taking snapshots after every primitive call which in this program is around every 2 or 3 instructions at certain points. When taking snapshots that frequently, performance degrades heavily, as shown in the MIO article. However, by using our trace-based approach to multiverse debugging we have very little overhead when compared to the normal execution. We believe this provides an interesting alternative implementation strategy for concrete multiverse debuggers that trades reversibility for performance.

To test the overhead of the different approaches, we tested how long it takes to execute 50 000 instructions of the breakout game without any snapshotting or tracing, with MIO's snapshotting strategy and our trace-based approach. We ran these tests on a Raspberry Pi Pico W which has a dual core Arm Cortex-M0+ processor running at 133MHz.

In \cref{fig:performance-full} we show the normal execution alongside the trace and snapshot-based approaches. Here it is clear that MIO's checkpointing strategy has significantly more overhead in this particular program. On the other hand, the trace-based approach is much closer to the normal execution because way less data needs to be sent to the client on the connected computer. Since the difference between the normal execution and the trace-based one is not clear in this figure, we also provide \cref{fig:performance-zoomed} which zooms in on just these two. This figure shows that the trace-based approach does have some overhead, but the overhead is relatively limited compared to the normal execution. While the overhead of the execution with tracing is relatively limited in comparison to MIO's checkpointing approach, it is important to note that the step back performance of our approach is reduced since the program needs to be re-executed from the start.

\definecolor{graphGreen}{HTML}{32A852}

\begin{figure}
\centering
\begin{tikzpicture}
\begin{axis}[
    width=\linewidth,
    height=7cm,
    xlabel={Instructions executed},
    ylabel={Time (s)},
    legend pos=north west,
    grid=major,
    mark options={solid},
]

\addplot[semithick,graphGreen,mark=square*] coordinates {
    (10000, 0.606)
    (20000, 0.941)
    (30000, 1.279)
    (40000, 1.611)
    (50000, 1.939)
};
\addlegendentry{Normal execution}

\addplot[semithick,blue!50,mark=*] coordinates {
    (10000, 0.606)
    (20000, 0.962)
    (30000, 1.321)
    (40000, 1.663)
    (50000, 2.025)
};
\addlegendentry{Tracing}

\addplot[semithick,red!50,mark=*] coordinates {
    (10000, 60.153)
    (20000, 120.273)
    (30000, 180.413)
    (40000, 240.746)
    (50000, 301.053)
};
\addlegendentry{Checkpointing}

\end{axis}
\end{tikzpicture}
\caption{Comparison of the normal execution, our trace-based approach and MIO's checkpointing approach.}
\label{fig:performance-full}
\end{figure}

\begin{figure}
\centering
\begin{tikzpicture}
\begin{axis}[
    width=\linewidth,
    height=7cm,
    xlabel={Instructions},
    ylabel={Time (ms)},
    legend pos=north west,
    grid=major,
    mark options={solid},
]

\addplot[semithick,graphGreen,mark=square*] coordinates {
    (10000, 606)
    (20000, 941)
    (30000, 1279)
    (40000, 1611)
    (50000, 1939)
};
\addlegendentry{Normal execution}

\addplot[semithick,blue!50,mark=*] coordinates {
    (10000, 606)
    (20000, 962)
    (30000, 1321)
    (40000, 1663)
    (50000, 2025)
};
\addlegendentry{Tracing}

\end{axis}
\end{tikzpicture}
\caption{Comparison of only the normal execution and our trace-based approach.}
\label{fig:performance-zoomed}
\end{figure}

\end{document}